\DeclareMathOperator*{\argmin}{argmin}
\newtheorem{definition}{Definition}
\newtheorem{lemma}{Lemma}
\newtheorem*{theorem*}{Theorem}
\theoremstyle{definition}
\newtheorem{assumption}{Assumption}
\newcommand{\revision}[1]{\textcolor{black}{#1}} 
\newcommand{\rev}[1]{\textcolor{black}{#1}}
\begin{document}

\title{Timely Target Tracking: Distributed Updating in Cognitive Radar Networks}
\author{William W. Howard, Anthony F. Martone, R. Michael Buehrer
\thanks{W.W. Howard and R.M. Buehrer are with Wireless@VT, Bradley Dept. of Electrical and Computer Engineering, Virginia Tech, Blacksburg, VA 24060, USA. \\ 
Correspondence:$\{${wwhoward}$\}$@vt.edu  \\
A.F. Martone is with DEVCOM Army Research Laboratory, Adelphi, MD 20783. (e-mail:{anthony.f.martone.civ}@army.mil).\\
Distribution Statement A: Approved for public release. Distribution is unlimited. \\
Portions of this work were presented at IEEE Radar Conference 2023, San Antonio, TX \cite{howard2023_timelyconf}. }
}

\maketitle
\pagenumbering{gobble}

\begin{abstract}
Cognitive radar networks (\textbf{CRNs}) are capable of optimizing operating parameters in order to provide actionable information to an operator or secondary system. 
CRNs have been proposed to answer the need for low-cost devices tracking potentially large numbers of targets in geographically diverse regions. 
Networks of small-scale devices have also been shown to outperform legacy, large scale, high price, single-device installations. 
In this work, we consider a CRN tracking multiple targets with a goal of providing information which is both fresh and accurate to a measurement fusion center (\textbf{FC}).  
We show that under a constraint on the update rate of each radar node, the network is able to utilize Age of Information (\textbf{AoI}) metrics to maximize the resource utilization and minimize error per track. 
Since information freshness is critical to decision-making, this structure enables a CRN to provide the highest-quality information possible to a downstream system or operator. 
We discuss centralized and distributed approaches to solving this problem, taking into account the quality of node observations, the maneuverability of each target, and a limit on the rate at which any node may provide updates to the FC. 
We present a centralized AoI-inspired node selection metric, where a FC requests updates from specific nodes. 
We compare this against several alternative techniques. 
Further, we provide a distributed approach which utilizes the Age of Incorrect Information (\textbf{AoII}) metric, allowing each independent node to provide updates according to the targets it can observe. 
We provide mathematical analysis of the rate limits defined for the centralized and distributed approaches, showing that they are equivalent. 
We conclude with numerical simulations demonstrating that the performance of the algorithms exceeds that of alternative approaches, both in resource utilization and in tracking performance. 
\end{abstract}

\begin{IEEEkeywords}
radar networks, age of information, cognitive radar, online learning
\end{IEEEkeywords}

\section{Introduction}
\subsection{Cognitive Radar Networks}
Cognitive radar networks (\textbf{CRNs}) consist of many cognitive radar devices (``nodes'') which observe targets and report to a fusion center (\textbf{FC}). 
The literature has provided many cognition models for CRNs including completely distributed cognition \cite{howard2021_multiplayerconf} \cite{howard2022_MMABjournal} \cite{howard2022_adversarialconf} and centralized cognition \cite{howard2023_hybridjournal} \cite{howard2022_decentralized_conf}. 
A topic of study for nearly two decades, the primary benefits of CRNs over traditional networked radar include spatial diversity, flexibility in dynamic environments, efficient usage of spectrum, and increased resilience to outages \cite{haykin2006}.

CRNs exist to gather information from targets in a possibly geographically diverse region. 
Cognitive radio was first proposed to address the need for communication systems to be adaptive in congested and possibly contested spectrum. 
The term ``cognitive'' appeals to the biological decision-making cycle as described variously \cite{Martone_CRN_loop} \cite{6218166} \cite{7910111}. 
In particular, cognitive radios employ the ``perception-action cycle,'' observing their environment and taking actions which are predicted by some internal model to positively impact performance. 
The current work employs cognitive processing to evaluate dynamic target behavior, reduce utilization of a shared communication resource, and improve the quality of FC estimation when using a network of radars.

Other works have conducted research in the area of radar network \emph{resource allocation}. 
Several works consider the application of the Bayesian Cram\'er-Rao Lower Bound \textbf{(BCRLB)} \cite{van1968detection}, which is a bound on the variance of any unbiased estimator. 
The BCRLB is a very useful tool in the radar context is at allows a device to estimate the expected error for proposed parameters. 
A specific implementation of the BCRLB as an optimization criteria has been in the field of \emph{power allocation}. 
Yan et al. consider the use of the BCRLB and develop a power allocation scheme for a radar sensor network \cite{YAN2020173} \cite{9133517}. 
They consider constrained optimization of network power subject to a minimum tracking error, determined by the BCRLB. 
They also provide a survey of resource allocation techniques for radar sensor networks \cite{YAN2022104}. 
\rev{The BCRLB differs from the metrics we propose in this work. 
Rather than predicting the future performance, our AoI-based metrics use the time since observations were received at the FC. 
This allows a low computational cost method of determining when nodes should provide updates. }

Rather than seeking to allocate resources between heterogeneous radars in a network, which each may have different operating parameters and performance, we assume that our network is composed of similar cognitive radar nodes. 
\rev{Rather than centrally selecting a set of radar devices based on expected performance via the BCRLB, we allow radar nodes to determine the time steps at which updates should be provided.} This means that problems such as power allocation will not have the same trade-offs as in a heterogeneous network. 
We study the allocation of communication resources in order to optimize the timeliness of updates. 
Further, the prior work does not consider the impact of heterogeneous \emph{targets} (in terms of motion model) on the resource optimization nor communication overhead required to achieve these gains. 
We seek to fill that void through this work.

\emph{Sensor collaboration} \cite{985685} is the process of collecting target information from low-power ad-hoc distributed sensor networks, often using graph theory to determine the optimal path for sensor information or for sensor querying. 
This style of target tracking differs from our problem. 
Instead of selecting the sensors (radar nodes) which should make target measurements, we are optimizing the time steps at which radar nodes send updates to a central fusion center.

\rev{Markov decision processes (\textbf{MDPs}) and their generalization to partially observed Markov decision processes (\textbf{POMDPs}) are mathematical structures often used to analyze decision-theoretic problems. 
Their relevance to our work is due to the system which we consider, where a network sequentially takes some action and observes some change in state. 
MDPs and POMDPs have been used to address resources allocation in radar networks previous to this work. 
The survey of \cite{4205087} identifies four key areas of application for MDP models in distributed multi-target tracking problems: power, sensing, communication, and computation. 
Our chosen problem falls into the third category of communication constraints. 
As we will discuss, our proposed solution adapts an age-of-information metric from the literature, which is shown to be a Bellman-optimal solution to an MDP in \cite{AgeOfIncorrectInformation}. }
\subsection{Single Node Techniques}

Single-node cognitive radar (i.e., not networked) has been investigated even more in the literature. 
Examples of proposed problems include waveform design optimization \cite{5074349} \cite{thornton2020efficient} \cite{kirk2017cognitive} \cite{thornton2022_universaljournal}, dynamic spectrum access (\textbf{DSA}) \cite{kirk2018avoidance} \cite{ravenscroft2018experimental} \cite{kovarskiy2020spectral}, and improved target detection \cite{martone2017adaptable} \cite{thornton2023icassp}. 
Commonly, cognitive radar problems include a reliance on \emph{online optimization} - sequential processes through which performance is improved. 
Online learning is more effective because offline learning is ineffective without high quality prior knowledge.

\revision{
Another branch of the single-node cognitive radar investigates the performance of co-located multiple-input multiple-output (\textbf{MIMO}) radar systems using multiple beams for target observation \cite{6960029} \cite{7069270}. 
The authors use convex optimization to separate the nonconvex allocation problem into several convex problems which may be solved optimally, and show that the worst case single-node tracking accuracy can be significantly improved. 
}

In particular, there has been a study of meta-cognitive techniques in single-node cognitive radar \cite{9114775}, \cite{10007921}, where a higher-level process analyzes the performance of different cognitive agents and adaptive selects those agents which exhibit superior performance. 
The work of \cite{thornton2022cognitive} analyzes the specific trade-off between a learning-based approach to DSA in cognitive radar and a fixed rule-based approach, concluding that there are regimes in which either technique is dominant. 

\revision{
\emph{Multi-target tracking} is the segment of the literature which addresses problems of data association and track maintenance \cite{5744132}. 
Data association \cite{4441756} becomes necessary when a sensor receives measurements of multiple targets (with some probability of missed detections and some probability of false alarms) and must \emph{associate} each measurement with a target track. 
The probability hypothesis density (\textbf{PHD}) filter \cite{1261119} \cite{5730505} \cite{5259179} has been proposed as a method of interpreting multi-target sensor data and forming target tracks. 
The PHD filter frames detections as random finite sets and uses the first-order moment to determine an intensity function over the target state space. 
A tractable PHD filter implementation was demonstrated in \cite{1710358}, which uses Gaussian mixture (\textbf{GM-PHD}) models in the first closed-form solution to PHD filtering. 
Since in our work target association is performed at the individual radar nodes and is not dependent on update frequency or update times, target association performance will have limited impact on the performance of node selection policies. 
Similarly, since we model target birth and death as isotropic processes (i.e., having constant probability throughout space and time), the probability of track initiation has no impact on the node selection algorithms. }

\revision{
Our work focuses on downstream systems from multi-target tracking. 
In particular, we consider questions such as ``How often should the FC receive updates on each target?'' and ``How do nodes decide when to send updates?''. 
}

\vspace{0.1in}
\subsection{Age of Information}
First proposed in \cite{AoI_initial} and gaining considerable traction recently, Age of Information (\textbf{AoI}) tools are popular when information freshness is desired. 
The survey by Yates et al. \cite{AoI_survey2} covers recent contributions and applications, characterizing AoI as ``performance metrics that describe the timeliness of a monitor's knowledge of an entity or process.'' 

Information freshness can be quantified via an ``age process.''  
Typically, an age process $\Delta(t)$ increases linearly in time (i.e., aging at a rate of one second per second). 
In the literature from which we draw our approaches, age processes relate to the time since an aggregator received updates from an observer on a process. 
So, when the observer provides an update to the aggregator, the age of the process is reset to 0. 

AoI has been implemented particularly often to solve problems in the field of \emph{federated learning}, where a central parameter server attempts to train a large machine learning (\textbf{ML}) model using numerous independent clients. 
Federated learning is most important in those domains where \emph{data privacy}\footnote{The distinction between data privacy and security is that while secure systems may share data, data privacy systems may not share data. } is essential, e.g. medicine or other personal identifiable information. 
Information freshness is utilized in this field to ensure the global model is updated by the most recent data. 
Our current work does not have the same purpose. 
Information freshness remains critical in CRNs to provide the most recent and accurate data to operators or downstream systems, and data security remains important, but not data privacy. 

Another field frequently finding use for AoI is distributed sensor networking, where multiple devices observe one or many processes. 
In \cite{AoI_Dhillon}, the authors describe a network of UAVs which assist an IoT-enabled network utilizing an AoI metric to maximize information freshness. 
In this and several similar works \cite{AoI_sensor}, \cite{AoI_multiple}, a scheduler must assign resources to each of several nodes. 
The first part of our work is similar to these, where we develop a centralized decision metric. 
However, the second part of our work derives a distributed technique which differs from these. 

The Age of Incorrect Information (\textbf{AoII}) \cite{AgeOfIncorrectInformation} is a recent metric which proposes to ``extend the notion of fresh updates to that of fresh \emph{informative} updates.''
Rather than measuring the time from the last update, AoII considers the information content of updates and in particular those which bring \emph{new and correct} information to the aggregator. 
This recognizes the fact that Markovian sources may not have new states at all times, so an observer might not need to continuously update an aggregator. 
A new update which contains identical information to the previous update is not necessarily useful. 
AoII has been utilized several times for centralized tracking of remote sources \cite{AoII_ISIT}, \cite{AoII_infocom}, but to the best of our knowledge it has not been adapted to the distributed, multi-process tracking problem. 

\subsection{Problem Summary}
Due to the possibly large and diverse geographic region under observation by the CRN, it follows that there may be nodes which see no targets and other nodes which see many targets. 
Similarly, there may be targets which are seen by many nodes and targets which are seen by no nodes. 
The quality of each node's observations of a given target (\emph{target observability}) will be influenced by large-scale channel conditions such as path length and small-scale channel conditions such as target frequency selectivity and angle. 
These effects are greatly impacted by the network geometry - the physical distances between radar nodes and their interactions with the environment and targets. 
We utilize a model driven by \emph{stochastic geometry} to describe these effects.

As the scene evolves in time, targets may exhibit dynamic motion, moving unpredictably. 
Similarly, the number of targets existing in the region need not stay fixed; targets are able to enter and exit the region. 
We model the targets as uncrewed aerial vehicles (\textbf{UAVs}) with independent and identically distributed (\textbf{i.i.d.}) motion models, which may take off or land from anywhere in the considered region. 
The motion models we consider are discrete-time Markov chains, where the probability of a transition to a different state of motion (e.g. linear or turning) depends only on the current state. 
We'll discuss a centralized polling process by which a node can communicate to the FC whether a target in their region has exhibited ``interesting'' behavior, such as a change in motion model state. 
Then, we will present a distributed process through which a node can decide whether or not to provide an update to the FC without any centralized intervention, utilizing information from target motion modeling. 
We refer to this stochastic target motion as \emph{maneuverability}.

The AoI inspired approaches we discuss in this and in our previous work \cite{howard2023_timelyconf} are intended to minimize the amount of time that the aggregator (FC) has out-of-date information on target processes (subject to constraints), and thereby minimize the error of the target state estimate which is provided to an operator or downstream system. 
If there are sufficient resources then the obvious and optimal solution is to send updates in every possible instance. 
However, as spectrum is inherently scarce, it is impractical and inefficient for the nodes to update the FC state in every possible instance. 
So, we impose an \emph{update rate constraint} on each node in order to keep the average utilized communication rate below a limit. 
The update rate is constrained due to congested spectrum; as we will describe, the network must share spectrum resources with other devices and is limited to some fraction of the available resources. 

We discuss how the consideration of target observability, maneuverability, and this update rate constraint impact the development of our AoI metric. 
The primary \emph{tool} we'll use is the AoII \cite{AgeOfIncorrectInformation}, which approaches this type of problem using a Markov chain model, providing a Bellman-optimal update policy. 
The primary \emph{modifications} we must make to AoII are consequences of scale; AoII was developed for single-process, single-observer systems and we consider the case where multiple nodes observe multiple targets. 
We measure the effectiveness of our proposed techniques using tools from the AoI literature as well as by calculating the tracking error achieved by the FC. 
Since this problem has not been addressed in the literature, we contextualize our results by comparison to likely candidates - namely, an approach based on iterative reinforcement learning (multi-armed bandits) and a random selection algorithm.

\subsection{Contributions} 
This problem, which has not yet been addressed in the literature save for our initial work \cite{howard2023_timelyconf}, resembles scheduling problems where a central server must collect information from distributed nodes. 
Consequently, we borrow from the AoI literature and develop two decision metrics. 
To the best of our knowledge, this work represents the first consideration of AoI metrics in CRNs. 
Specifically, we contribute the following: 
\begin{itemize}
    \item A centralized ``track-sensitive AoI metric,'' which utilizes a polling process to allow the FC to select nodes to provide updates in each update interval. 
    \item An adaptation of AoII to enable distributed decision-making, where each node implements a Bellman-optimal policy to determine when to provide updates. 
    In contrast to the centralized solution, where the FC coordinates all interactions, the distributed solution relies on each node to decide when to send updates. 
    Specific adaptations to the AoII algorithm include a modified Markov model and rate limits for multiple nodes. 
    \item Mathematical analysis of our proposed techniques. 
    \item Numerical simulations to support our conclusions. 
\end{itemize}

\subsection{Notation}
We use the following notation. 
Matrices and vectors are denoted as bold upper $\mathbf{X}$ or lower $\mathbf{x}$ case letters respectively.
Element-wise multiplication of two matrices or vectors is shown as $\mathbf{X}\odot \mathbf{Y}$. 
Functions are shown as plain letters $F$ or $f$. 
Sets $\mathcal{A}$ are shown as script letters. 
Denote the Lebesgue measure of a set $\mathcal{A}$ as $|\mathcal{A}|$. 
When we wish to show the number of elements in a (finite) set $\mathcal{A}$ rather than its measure, we use the cardinality $\#(\mathcal{A}$). 
The transpose operation is $\mathbf{X}^T$. 
The backslash $\mathcal{A}\backslash \mathcal{B}$ represents the set difference. 
Boxes (intervals) in $\mathbb{R}^d$ are written as $[a,b]^d$ and when the elements of a set are denoted, they are given as $\mathcal{A} = \{a, b, c, \dots \}$. 
Random variables are written as upper-case letters $X$, and their distributions will be specified as $X\sim\mathcal{D}(\cdot)$. 
The letter $U[a,b]$ is used to denote the uniform distribution on an interval $[a,b]$. 
The set of all real numbers is $\mathbb{R}$ and the set of integers is $\mathbb{Z}$. 
The speed of electromagnetic radiation in a vacuum is given as $c$. 
The Euclidean ($\ell^2$) norm of a vector $\mathbf{x}$ is written as $||\mathbf{x}||_2$. 
Estimates of a true parameter $p$ are given as $\hat{p}$. 

\subsection{Organization} 
The remainder of this work is organized as follows. In Section \ref{sec:models}, we discuss prior work in this area and develop relevant models. 
In Section \ref{sec:centralized} we derive a technique for centralized decision-making. 
Then, in Section \ref{sec:distributed} we discuss our AoII-based method for distributed updating. 
Section \ref{sec:simulations} provides numerical simulations and in Section \ref{sec:conclusions} we draw conclusions and suggest future work.


\section{System Model}
\label{sec:models}
In this section we will describe a general framework which will inform the development of our algorithms. 
We'll also cover mathematical preliminaries which will be required later. 
While these models make some specific assumptions (e.g., on the motion of targets), we strive to avoid any assumptions that would limit the applicability of our work. 

\subsection{Spatial Modeling}
\label{ss:spatial} 
Sensor networks (and more broadly wireless networks) are often analyzed using tools from \emph{stochastic geometry}. 
This is the field of study which considers the mathematical and statistical relationships between spatial processes. 
Poisson point processes (\textbf{PPPs}) are of particular interest, due to their generalizabililty and mathematical tractability. 
A point process is a random collection of points in space. 
As provided in \cite{haenggi}, a PPP on $\mathbb{R}^d$ with \emph{intensity measure} $\Lambda$ and \emph{density} $\lambda$ per unit area can be simulated on a compact region\footnote{Compactness is the only required condition, and the region can be disjoint. However, the procedure to simulate a PPP is most straightforward in rectangular regions} $B \subset \mathbb{R}^2$ with measure $|B|$ by drawing a Poisson number $N$ with mean $\lambda|B|$, and place $N$ points uniformly at random in $B$. 
Formally, 
\begin{equation}
    \label{eq:poisson}
    Pr(N=n) = \frac{\lambda|B|^n e^{-\lambda|B|}}{n!}
\end{equation}
In other words the number of points\footnote{\rev{By ``points'' we refer to ground-truth target locations rather than radar detections of targets. As covered by \cite{8289337}, target \emph{detections} and therefore the posterior hypothesis density will not be spatially uniform. }} inside a compact region is distributed according to the measure (or size) of that region. 

For numerous reasons (e.g., power limits, geographic obstructions and pulse range gating), the practical range of any radar node is limited. 
So, we model each radar node $n$ as covering a specific region $S_n$ with measure $|S_n|$. 
We will discuss later the fact that the FC does not need to know the shape of this region if $|S_n|$ is known.

\begin{figure}
    \centering
    \includegraphics[scale=0.55]{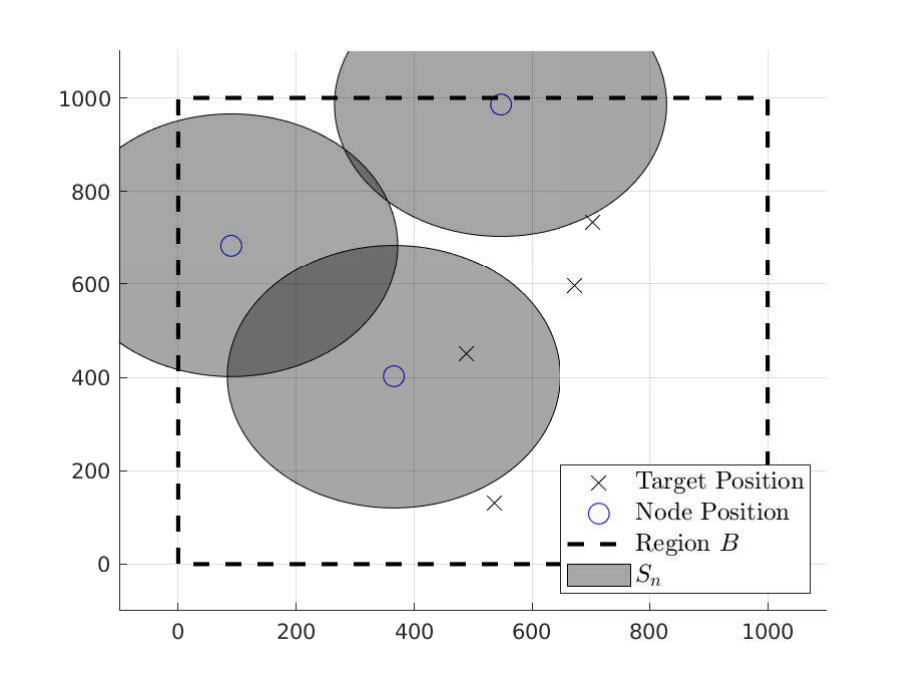}
    \caption{Tracking scenario with node density $\lambda_n=3$ and target density $\lambda_m=7$. }
    \label{fig:scene}
\end{figure}

Consider a compact region $B\subset \mathbb{R}^2$. 
Inside this region, targets are positioned according to a PPP with density $\lambda_m$. 
The target positions evolve in time according to the motion model discussed below. 
Further, inside the compact region $B$, we place a number of radar nodes according to a PPP with density $\lambda_n < \lambda_m$. 
Let $M$ be the Poisson random variable describing the number of targets, and note that $\overline{M}=\lambda_m|B|$ is the mean of this distribution. 
For a given instance of the model, we collect all of the targets into the set $\mathcal{M}$ and index them as $m\in\mathcal{M}$. 
When we wish to refer to the specific coordinates of a target $m$, it is denoted as $X_m$. 
Similarly, we collect each radar node $n$ into the set $\mathcal{N}$, letting $N$ with mean $\overline{N}$ be the random variable which describes the number of radar nodes. 
$X_n$ represents the coordinates of node $n$. 
When we conduct numerical simulations of stochastic processes, a single simulation consists of a single realization of each distribution. 
The resulting set of points is referred to as a point pattern. 

The \emph{coverage} of a wireless network is the \revision{union} of the coverage of each node. 
In general, each node $n\in\mathcal{N}$ covers a compact region $S_n$, which is a subset of $B$. 
The covered regions are drawn i.i.d. \revision{from} a spatial distribution\footnote{Commonly balls of random radii are used for i.i.d. coverage but in general $S_n$ need only be compact with $|S_n|>0$. }. 
In the typical \emph{Boolean model} \cite{haenggi}, it is particularly assumed that each node in a network covers a disk of fixed radius $r$. 
The probability that a location $x$ is covered by the network is then
\begin{align}
    \label{eq:coverage}
    Pr(x\in \bigcup_{n\in\mathcal{N}} S_n) &= 1-e^{-\lambda_n\mathbb{E}|S_n|}\\
    &= 1-e^{-\lambda_n\pi r^2}
\end{align}

This means that the probability any target is covered is given as Eq. (\ref{eq:coverage}). 
Further, for a single instance of this model, the number of radar nodes which cover a given target is Poisson distributed with mean $\lambda_n\pi r^2$.
Note that unlike the target spatial distribution, the coverage is constant in time since this is a general result for any location $x$. 
Since each target has a less than unity probability of being observed, some number of targets will be unobserved by the network in each time step. 
That probability can be found as Eq. (\ref{eq:unobserved}). 
\begin{equation}
    \label{eq:unobserved}
    \#\{m\notin \bigcup_{n\in\mathcal{N}}S_n\} \sim \text{Poisson}\left(\lambda_m e^{-\lambda_n \pi r^2}\right)
\end{equation}

Figure \ref{fig:scene} shows an instance of such a network, with a target density $\lambda_m = 20$ and a node density $\lambda_n = 10$. 
The coverage of each node is a circle centered on the node with an area of $|S_n| = 0.2|B|$.

\subsection{Motion Modeling}
\label{ss:motion}
We model the motion of the target UAVs according to a multi-state Markov chain \cite{rs12223789}. 
While in general many types of motion could be considered, in this work we focus on constant-velocity and constant-turn motion. 
When initialized, each target $m\in\mathcal{M}$ draws probabilities $P_{m,1}\sim U[0.7, 0.9]$ and $P_{m,2}\sim U[0.5, 0.7]$ to form a transition matrix: 
\begin{equation}
\label{eq:transition}
    T_m = \begin{bmatrix}
        P_{m,1} & 1-P_{m,1}\\
        1-P_{m,2} & P_{m,2}
        \end{bmatrix}
\end{equation}
Figure \ref{fig:markov_chain} demonstrates this Markov chain. 
The distributions over transition probability can be tailored such that one state is more common than others - specifically, such that constant velocity motion is much more common than constant turn motion. 

\begin{assumption}
\label{ass:motion}
The motion for each target evolves according to a Markov chain with constant transition probabilities, drawn i.i.d. for each node. The radar nodes must estimate these probabilities. 
\end{assumption}

\begin{figure}
    \centering
    \includegraphics[scale=0.15]{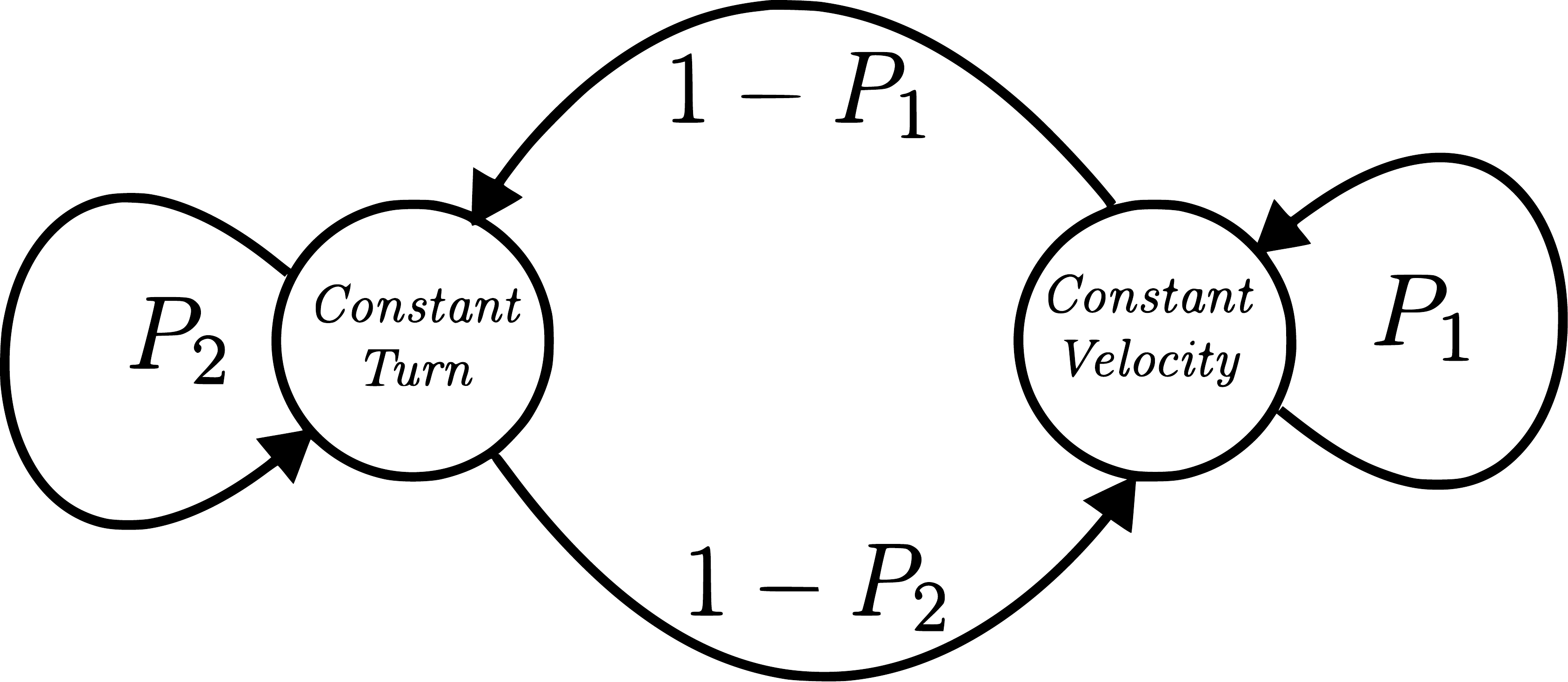} 
    \caption{The Markov chain motion model exhibited by UAV targets. }
    \label{fig:markov_chain}
\end{figure}

\subsection{Network Modeling}
As discussed, the network consists of the set of radar nodes $\mathcal{N}$ which attempt to track the set of targets $\mathcal{M}$, feeding the information to the FC. 
Figure \ref{fig:system} shows a version of this network: several targets are in the scene, with some observed multiple times and some not observed at all. 
The shortest time interval considered is the Coherent Pulse Interval (\textbf{CPI}), during which each radar node emits several pulses, coherently integrates them, and performs signal processing to extract estimated target parameters (position and velocity). 
\revision{T}he CPI duration for each node $n$ is constant, but they need not start simultaneously. 
\revision{The nodes report their observations to the FC as updates. 
Importantly, since the nodes use monostatic radar and process the received signals locally, the updates only contain position estimates for each target. 
This lessens the need for synchronization in the network; the FC combines detections rather than pulses, so does not require tight timing. 
Further, it is not necessary for the node CPIs to start at the same time and be of the same duration. }

\begin{figure}
    \centering
    \includegraphics[scale=0.5]{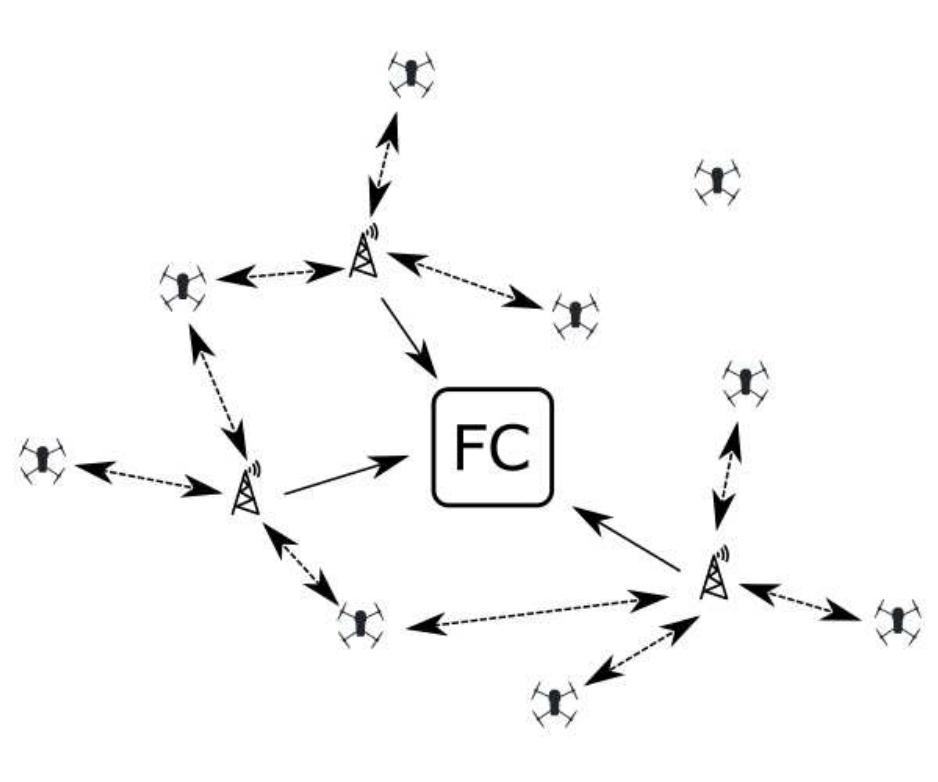}
    \caption{Example network showing which nodes can observe which UAVs. Some targets are observed by multiple nodes and some targets are not observed at all. }
    \label{fig:system}
\end{figure}

\subsubsection{\revision{Measurement Model}}
The FC processes new updates at the same rate that CPIs occur (i.e., 1/CPI). 
All updates are assumed to arrive at the end of the update interval. 
In the distributed approach, the FC processes all updates received during an update interval at the end of the interval.

An update at time $t$ from node $n$ consists of that node's current estimate for each target $m$ in the region $S_n$. 
Based on the target range and aspect, node $n$ observes each target $m$ with a variance $\sigma_{n,m}$ \revision{and a detection probability $P_D$. 
False alarms are generated randomly with fixed probability $P_{FA}$. 
\rev{Note that while the set of existing targets is spatially distributed as a PPP, the set of \emph{detections of those targets} will be spatially non-uniform due to false alarms and missed detections. 
Further, the posterior probability distribution of targets will be spatially non-uniform due to the weights on different motion models. }
Detections are associated with target tracks, and new target tracks are formed for new detections. }
The node $n$ can gather these targets into the set $\hat{\mathcal{X}}_n^{(t)}$. 
Nodes estimate three main quantities for a target $m$: 
\begin{itemize}
    \item The target position $X_m$. 
    \item The target velocity $\dot{X}_m$. 
    \item The target motion model state $\gamma_m$. 
\end{itemize}
Taken together, these quantities form a vector 
\begin{equation}
    \mathbf{X}_m^{(t)} = \left[ X_m^{(t)}, \dot{X}_m^{(t)}, \gamma_m^{(t)}\right]
\end{equation}
which node $n$ estimates, forming 
\begin{equation}
    \hat{\mathbf{X}}_m^{(t)} = \left[ \hat{X}_m^{(t)}, \hat{\dot{X}}_m^{(t)}, \hat{\gamma}_m^{(t)}\right]
    \end{equation}
for all $m\in\hat{\mathcal{M}}^{(t)}_n$ \revision{(the set of targets observed by node $n$ at time $t$)}. 
We use the set $\hat{\mathcal{X}}_n^{(t)}$ to refer to the update provided by node $n$ at time $t$. 
\revision{Denote as $\mathcal{T}^*_{m,n}$ (with $\#(\mathcal{T}^*)\leq T$) the time steps $t<T$ where node $n$ detected target $m$. 
Only targets $m$ such that $\#(\mathcal{T}^*_{m,n})>2$ are included in the update to avoid false tracks. }

The FC forms tracks of all targets it has observed, using Kalman filtering to improve position estimation. 
Denote as 
\begin{equation}
    \overline{\mathbf{X}}^{(t)}_m = \left[ \overline{X}_m^{(t)}, \overline{\dot{X}}_m^{(t)}, \overline{\gamma}_m^{(t)}\right]
\end{equation}
the FC's estimate of target $m$. 
For each $m \in \mathcal{M}^*_n$ \revision{(the set of all targets observed by node $n$ at all times)}, node $n$ forms an Interacting Multiple Model (IMM) filter \cite{blackman1999design}, a type of tracking Kalman filter which evaluates the probabilities of multiple motion models over time. 
This filter estimates the motion model transition probabilities for target $m$ and forms state estimates based on the current estimated motion model. 
Figure \ref{fig:track} shows the fused FC estimate for a single target's track. 
\revision{
Note that if missed detections occur, the node propagates the Kalman filter for the target and provides the predicted values in its update. 
False tracks are avoided by requiring several measurements before a track is confirmed. 
Only target detections are used to update motion model probabilities, not Kalman filter estimates. 
}

\begin{figure}
    \centering
    \includegraphics[scale=0.55]{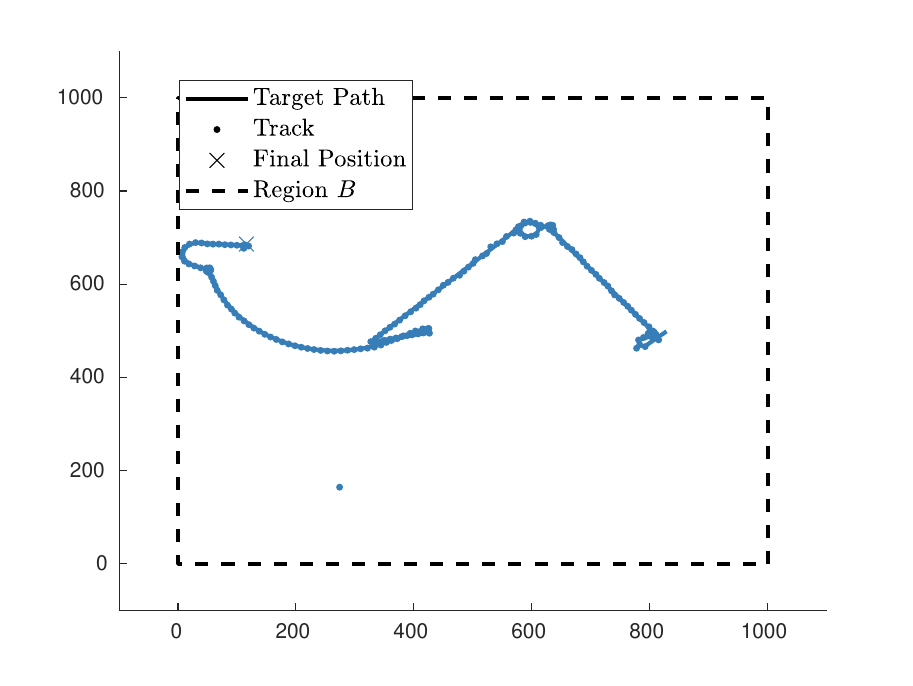}
    \caption{A sample fused track from the perspective of the FC. }
    \label{fig:track}
\end{figure}

Let $\mathcal{N}^{(t)}$ denote the nodes providing updates in the update interval ending at time $t$. 
Then, the FC receives updates on all targets $m\in\mathcal{M}^{(t)}$, where
\begin{equation}
\label{eq:FC_targs}
    \mathcal{X}^{(t)} = \bigcup_{n\in\mathcal{N}^{(t)}}\hat{\mathcal{X}}_n^{(t)}
\end{equation}
Finally, we have set $\mathcal{X}_{FC}^{(t)}$ which is the set of all target tracks formed at the FC. 
Note that $\mathcal{X}_{FC}^{(t)} \supseteq \mathcal{X}^{(t)}$ since each target may not be updated at all $t$. 
\rev{Also note that in a realistic system, the FC would perform track association in order to combine tracks from two nodes which observe the same target. 
Since errors in this process would impact all solutions regardless of the rate at which targets are updated, we omit this step. }
The FC maintains a record of the \emph{age} of each target track. 
Let $v_m$ denote the most recent time for which $m\in\mathcal{X}^{(t)}$. 
Then, the age of the FC track for target $m$ can be written as Eq. (\ref{eq:age}). 
Accordingly, when $m\in\mathcal{X}^{(t)}$, $\Delta_m(t) = 0$, which is the smallest value $\Delta$ can take. 
\begin{equation}
    \label{eq:age}
    \Delta_m(t) = t - v_m
\end{equation}

\subsubsection{\revision{Update Policies}}
The network communicates over a shared communications resource, which is divided into $R$ resource blocks per CPI. 
Other devices are assumed to be present, occupying some amount of the shared resource.
In order to maintain performance and avoid unnecessary interference, the target tracking network is limited to some fraction $f < 1$ of the available spectrum resources. 
Let $C+1 = fR$ be the tracking network's update capacity per CPI. 
Then, $C$ resource blocks are used by the nodes to transmit updates, and one resource block is used by the FC to transmit any feedback. 
Since the problem is trivial if $C>N$ (there is sufficient capacity for each node to provide updates in each CPI), we will assume that $C<N$. 
Prior work \cite{howard2022_MMABjournal} has established techniques for spectrum sharing in CRNs. 
One resource block is sufficient to allow a node to transmit updates on the expected number of targets in each region, $\lambda_m|S_n|$. 
Since the resource blocks must be assigned on-the-fly, we will assume that one resource block is required for a single node to provide any number of target updates. 

Another way of phrasing this, more relevant to Section \ref{sec:distributed}, relates this network rate limit to a limit on the rate of each node. 
If the network update rate is $C$ and there are $\mathbb{E}\left(N\right)=\lambda_n|B|$ nodes, each node is allocated a rate of $\frac{C}{\lambda_n|B|}$ updates per CPI. 
Since $C<N$, let
\begin{equation}
\label{eq:alpha}
    \alpha = \frac{C}{\lambda_n|B|}
\end{equation}
with $0<\alpha<1$ be the (average) number of updates each node can provide per CPI. 
Let $\phi$ be a \emph{policy} which determines the actions that node $n$ takes. 
Then, define an action $\Psi^\phi_n(t)$
\begin{equation}
    \Psi^\phi_n(t) = \begin{cases}
        1, & n\in \mathcal{N}^{(t)}\\
        0, & n\notin \mathcal{N}^{(t)}
    \end{cases}
\end{equation}
which indicates whether node $n$ sends an update during the CPI ending at time $t$. 
\begin{definition}[Constrained Policy]
\label{def:constrained}
A policy $\phi$ is said to be constrained if Eq. (\ref{eq:constrained}) holds. 
\begin{equation}
\label{eq:constrained}
    \limsup_{T\to\infty}\frac1T\mathbb{E}^\phi\left(\sum_{t=0}^{T-1}\Psi^\phi_n(t)\right)\leq\alpha
\end{equation}
\end{definition}

\begin{definition}[Fixed Policy]
\label{def:fixed}
    A policy $\phi$ is said to be \emph{fixed} if $\#(\mathcal{N}_\phi(t))=C\; \forall t$. 
\end{definition}

\begin{lemma}[Fixed Policies are Constrained]
\label{lem:fixed}
Let policy $\phi$ be fixed, so that $\#(\mathcal{N}_\phi(t))=C\; \forall t$. 
Then, $\phi$ is constrained.
\end{lemma}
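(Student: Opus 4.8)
The plan is to exploit the defining property of a fixed policy, namely that exactly $C$ nodes transmit in every CPI, and to convert this network-wide statement into a per-node rate by a symmetry (exchangeability) argument. First I would start from the hypothesis $\#(\mathcal{N}_\phi(t))=C$ for all $t$ and rewrite the cardinality as a sum of indicators: since $\Psi^\phi_n(t)$ records whether node $n$ transmits, we have $\sum_{n\in\mathcal{N}}\Psi^\phi_n(t)=\#(\mathcal{N}_\phi(t))=C$ for every $t$. Summing over $t=0,\dots,T-1$ yields a deterministic aggregate of $CT$ transmissions across the network over any horizon $T$, independent of the policy's internal randomization.

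Next I would invoke symmetry. Because the node locations are drawn i.i.d.\ from the spatial process of Section \ref{ss:spatial} and the motion models are i.i.d.\ across targets, the nodes are exchangeable, so a fixed policy has no reason to favor any particular node and each of the $\overline{N}=\lambda_n|B|$ nodes carries an equal expected share of the aggregate. Taking expectations and distributing the total $CT$ equally gives $\mathbb{E}^\phi\!\left(\sum_{t=0}^{T-1}\Psi^\phi_n(t)\right)=CT/\overline{N}$ for each $n$. Dividing by $T$ and letting $T\to\infty$, the per-node long-run average equals $C/\overline{N}=\alpha$ by the definition \eqref{eq:alpha}. Since this limit exists and equals $\alpha$, the $\limsup$ in \eqref{eq:constrained} also equals $\alpha$, so Definition \ref{def:constrained} is satisfied, in fact with equality.

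The main obstacle is the step linking the aggregate transmission budget to the individual-node rate. The aggregate identity only controls the sum $\sum_n R_n$ of the per-node rates, and subadditivity of $\limsup$ supplies an inequality in the unhelpful direction; to pin down each $R_n$ I must use exchangeability to assert that all per-node rates coincide, together with the fact that the fixed-policy total is exactly (not merely at most) $C$ at every step, so that the Cesàro averages converge and $\sum_n R_n=C$ holds as an equality rather than a one-sided bound. A secondary subtlety is that $\alpha$ is defined through the mean node count $\overline{N}$ rather than the realized count $N$, so a naive averaging over the random $N$ would introduce a Jensen-type gap; I would resolve this by working conditionally on $N=\overline{N}$, or equivalently by interpreting the rate in the mean-field sense in which the per-node allocation $\alpha=C/(\lambda_n|B|)$ was originally introduced in the text preceding \eqref{eq:alpha}.
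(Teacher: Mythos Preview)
Your proposal follows essentially the same route as the paper's proof: use $\#(\mathcal{N}_\phi(t))=C$ to get a total of $CT$ transmissions over the horizon, then share this equally among the nodes to obtain the per-node rate $C/(\lambda_n|B|)=\alpha$. The paper's argument is terser---it writes $\mathbb{E}^\phi\!\left[\sum_t \Psi_n^\phi(t)\right]=TC/N$ and then substitutes $\mathbb{E}(N)$ for $N$ without comment---so the exchangeability assumption and the Jensen-type gap you flag as obstacles are glossed over there; your treatment is more careful than the paper's, but the underlying idea is identical.
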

\begin{proof}
    \begin{align*}
    \limsup_{T\to\infty}\frac1T\mathbb{E}^{\phi}\left(\sum_{t=0}^{T-1}\Psi^{\revision{\phi}}_n(t)\right) &= \limsup_{T\to\infty}\frac1T\mathbb{E}^{\phi}\left(\frac{TC}{N}\right)\\
    &= \limsup_{T\to\infty}\frac1T\left(\frac{TC}{\mathbb{E}(N)}\right)\\
    &= \frac{C}{\lambda_n|B|}\\
    &= \alpha
\end{align*}
so $\phi$ is constrained. 
\end{proof}

The next two sections focus on forming the set $\mathcal{N}^{(t)}$, first in a centralized manner and then allowing each node to decide when to send an update. 


\section{Centralized Policy}
\label{sec:centralized}
Using a polling process, the FC can determine the set $\mathcal{N}^{(t)}$ using the optimization process detailed in this section. 
Specifically, this polling process uses the feedback channel to first poll each node, and then to command that node to provide an update. 
The polling process allows each node to inform the FC if it has observed an ``interesting'' update, which we define as a target entering the environment, exiting the environment, or changing motion model states. 
When a node has an interesting update it is added to the set $\mathcal{A}^{(t)}$.

While the FC wishes to maintain low-age, high-accuracy tracks for all targets, it can only receive updates on these targets by utilizing a set of nodes $\mathcal{N}^{(t)}$ which may provide duplicate updates. 
Generally, due to the spatial coverage model discussed in Section \ref{ss:spatial} and as shown in Figure \ref{fig:nodes_to_targets}, the number of updated targets asymptotically approaches the total number of targets as the number of updated nodes increases. 
Each target may be observed by multiple nodes so as more nodes provide updates, some targets may be updated more than once. 
\begin{figure}
    \centering
    \includegraphics[scale=0.55]{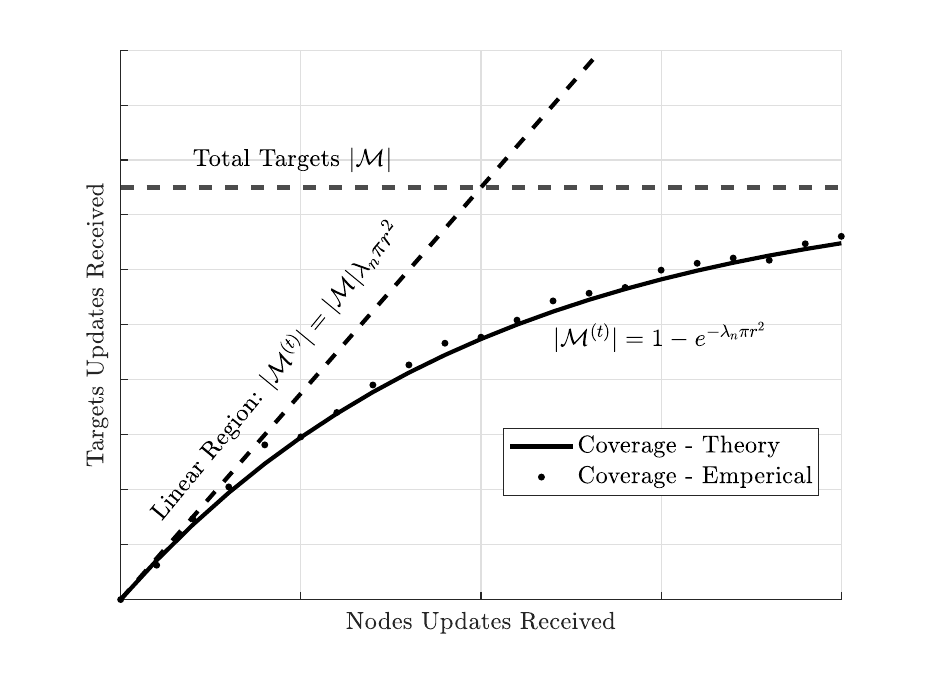}
    \caption{As the update rate for each node increases, the number of targets updated increases more slowly. }
    \label{fig:nodes_to_targets}
\end{figure}
Each target may be updated more than once; as discussed in the coverage model, the number of times any point in the region $B$ is covered is Poisson distributed. 
So, \emph{more than one node may cover any given target} and \emph{each node may cover more than one target}. 
The natural conclusion is that some nodes will provide updates more frequently than others. 
We impose a capacity limit on the centralized network, where $C$ nodes can provide updates per second. 
The centralized approach is a \emph{fixed policy}: the number of nodes selected for updates is constant.

From the perspective of the FC, we have the set of radar nodes $\mathcal{N}$, where each node $n$ can observe each target $m \in \mathcal{M}_{FC}^{(t)}$ with a variance $\sigma_{n,m}$. 
When a node is unable to observe a target, the variance is set to a sufficiently high value. 
This is the target \emph{observability}. 
From the polling process, the FC also has $\mathcal{A}^{(t)}$, the set of nodes with interesting updates. 
This represents the target's \emph{maneuverability}. 

The FC also records the age of each target track for each time $t$ as $\Delta_m(t)$. 
As the FC is only aware of targets $m\in\mathcal{M}_{FC}^{(t)}$, these are the only targets for which age information is maintained. 
Since a target may leave the scene and no longer be observed, we set a maximum age value $\Delta^{max}$ and remove any target with greater age from consideration. 
Formally, 
\begin{equation}
    \mathcal{M}_{FC}^{(t)} = \mathcal{M}_{FC}^{(t-\tau)} \backslash m \;\; \forall m \text{ s.t. } \Delta_m(t) \geq \Delta^{max}
\end{equation}
\revision{Since the FC is not aware of missed detections or false tracks, they are not considered here. }

A naive age- or quality-greedy approach may select the nodes which observe the most maximum-age or minimum-variance targets respectively. 
The centralized objective function we use can be formed as Eq. (\ref{eq:centralized_objective}). 
The product of the target age and node tracking variance forms the base of the objective, with an additional factor of $\alpha\in[0,1]$ when the node in question isn't ``available.'' 
There is an additional discount of $\gamma\in[0,1]$ in the case where target $m$ is not observable by node $n$ $(m\notin S_n$). 

\begin{align}
\label{eq:centralized_objective}
    \mathcal{N}^{(t)}_{cent} &= \max_{\mathcal{N}^{(t)}\in\mathcal{N}} \sum_{n\in\mathcal{N}^{(t)}} Q(n) \\
    \text{s.t. }&\#(\mathcal{N}^{(t)}_{cent}) = C \nonumber \\
    &Q(n) = \sum_{\mathclap{m\in\mathcal{M}^{(t)}}}\tilde{\alpha}_n F_{n,m} \nonumber \\
    &F_{n,m} = \begin{cases}
        \Delta_m(t)\sigma_{n,m}^{-1},  & m \in \hat{\mathcal{X}}_n^{(t)}\\
        \gamma, & m \notin \hat{\mathcal{X}}_n^{(t)} 
    \end{cases} \nonumber \\
    &\tilde{\alpha_n} = \begin{cases}
        1, & n \in \mathcal{A}(t)\\
        \alpha, & \text{else}
    \end{cases} \nonumber
\end{align}

This objective function optimizes the number of target tracks updated, the FC age of those tracks, and the variance with which each node measures that target. 
In other words, if a target is observed at high variance by one node and low variance by another node, the FC will be more likely to select the low variance observation for updating. 
Also, if one node observes more targets than another, it will be more likely to be selected by the FC. 
The FC can only pick $C$ nodes per update period, so this policy cannot exceed the resource constraint. 
The term $\tilde{\alpha}_n$ indicates whether or not node $n$ has an ``interesting'' (targets entering, exiting, or changing motion model state) update. 
If there is no interesting update at a given node there will be a penalty, but it may still be selected. 

A straightforward technique to solve this objective is to form a bipartite matching problem. 
The FC can form a matrix of edges $\mathcal{E}$ between the two vertex sets $\mathcal{N}$ and $\mathcal{M}_{FC}^{(t)}$. 
Each index $\mathcal{E}_{n,m}$ is set to $\tilde{\alpha}_nF_{n,m}$. 
The FC then fills the set $\mathcal{N}_{cent}^{(t)}$ of $C$ nodes by selecting $n=\max_{n\in\mathcal{N}}Q(n)$ until the set is full. 
As each node is selected, the edges corresponding to the selected targets are set to 0 to maximize the number of target updates received. 
Algorithm \ref{algo:centralized} shows all of the relevant steps. 

Since the optimization is subject to $\#(\mathcal{N}_{cent}^{(t)})=C$, the centralized policy is fixed and therefore by Lemma \ref{lem:fixed} is constrained.

\begin{algorithm}
    Poll each node to form $\mathcal{A}(t)$\\
    Form $\mathcal{E}(t)$, where
    \begin{equation}
        \mathcal{E}(t)_{n,m} = \tilde{\alpha}_nF_{n,m}
    \end{equation}\\
    \For{c=1:C}{
        $$\mathcal{N}_{cent}^{(t)} = \mathcal{N}_{cent}^{(t)} \cup \max_{\mathclap{n\notin\mathcal{N}_{cent}^{(t)}}} Q(n)$$\\
        $$\mathcal{E}(t)_{:,m\in\mathcal{M}_n^{(t)}} = 0$$\\
    }
    Return $\mathcal{N}_{cent}^{(t)}$
\caption{Track-Sensitive AoI Node Selection}
\label{algo:centralized}
\end{algorithm}

%


\section{Distributed Age of Incorrect Information Policy}
\label{sec:distributed}
In the distributed approach, each node is added to the set $\mathcal{N}^{(t)}$ if it provides an update at time $t$, determined by a metric which is internal to each node. 
For several reasons, distributed updates involve greater complexity than centralized selection. 
Since communication is limited in the network, each node is not privy to the observations of other nodes. 
In addition, each node is not aware of the times at which other nodes provide updates. 
This implies that since a target \emph{may} be observed by more than one node, it is not possible to know with certainty the state of the FC. 
In other words, the FC may have more up-to-date information than a single node believes, leading that node to overuse the communication resource and provide inefficient redundant updates. 

A distributed approach is desirable for several reasons. 
Primarily, the FC does not necessarily have all relevant information. 
Since nodes are able to directly observe the target states evolving in time, moving the decision process towards the edge of the network allows more accurate decisions to be made.
As a consequence, more useful updates can be provided to the FC meaning that the spectrum resources can be more efficiently used. 
In this section, we will discuss how a policy can maintain a limit on the average update rate while offloading the updating decision to each node. 

\subsection{Preliminaries}
Rather than rely on the AoI as described above, we adopt the Age of Incorrect Information. 
AoII provides an analytically tractable method for a node to provide updates to a FC without direct control from the FC. 
We make several modifications to extend single-observer AoII to the multi-node scenario.

In general, the AoII for a Markov process $X(t)$ is written as 
\begin{equation}
\label{eq:AoII}
    \Delta_{AoII}(t) = f(t) \times g\left(\mathbf{\hat{X}}(t), \overline{\mathbf{X}}(t)\right)
\end{equation}
where $f(t)$ denotes a \emph{time penalty} function, and $g(\mathbf{\hat{X}}(t), \overline{\mathbf{X}}(t))$ an \emph{information penalty} function. 
The penalty function takes the current state $\mathbf{\hat{X}}(t)$ and the \emph{last updated state} $\overline{\mathbf{X}}(t)$ as inputs, and represents the distance between a node's current state and the last known state of the FC. 
When these two states are equal, the FC's tracking error will be limited by its Kalman filtering, even when it receives new data. 
However, when the target motion model changes, the FC's tracking error will increase. 
For the \emph{distributed} timely target tracking problem, we must develop the idea of a \emph{target state} and \emph{target age} somewhat further. 
We discuss the time and information penalty functions more later.

Due to Assumption \ref{ass:motion}, we know that the true motion of a target $m$ evolves in time according to a two-state Markov model with fixed transition probabilities $P_{m,1}$ and $P_{m,2}$. 
A general model could encompass many more states, but a two-state model is sufficient to describe the dynamics. 
When the transition probabilities are both small, the target's motion will update less frequently. 
We'd like for the motion model to update much less frequently than the duration of a single update period, so that the track quality remains high.

The update frequency is better expressed by the \emph{entropy rate} of the Markov chain. 
The entropy rate describes the rate at which a Markov chain changes states. 
More formally the entropy rate is given as Eq. (\ref{eq:entropy_rate}) for a Markov chain with transition matrix $T$ and stationary distribution $\mu$. 
The stationary distribution describes the asymptotic probability that a Markov chain takes a certain state - the portion of time that the Markov chain spends in that state.  
\begin{equation}
\label{eq:entropy_rate}
    H(T) = -\sum_{i,j}\mu_i T_{i,j}\log_2T_{i,j}
\end{equation}
This is shown in Figure \ref{fig:acc_vs_man}. 
When the entropy rate is low, and once the IMM Kalman filter receives enough samples, it can estimate the target state with high accuracy. 
However, when the entropy rate increases, the best-case error also increases. 
\begin{figure}
    \centering
    \includegraphics[scale=0.55]{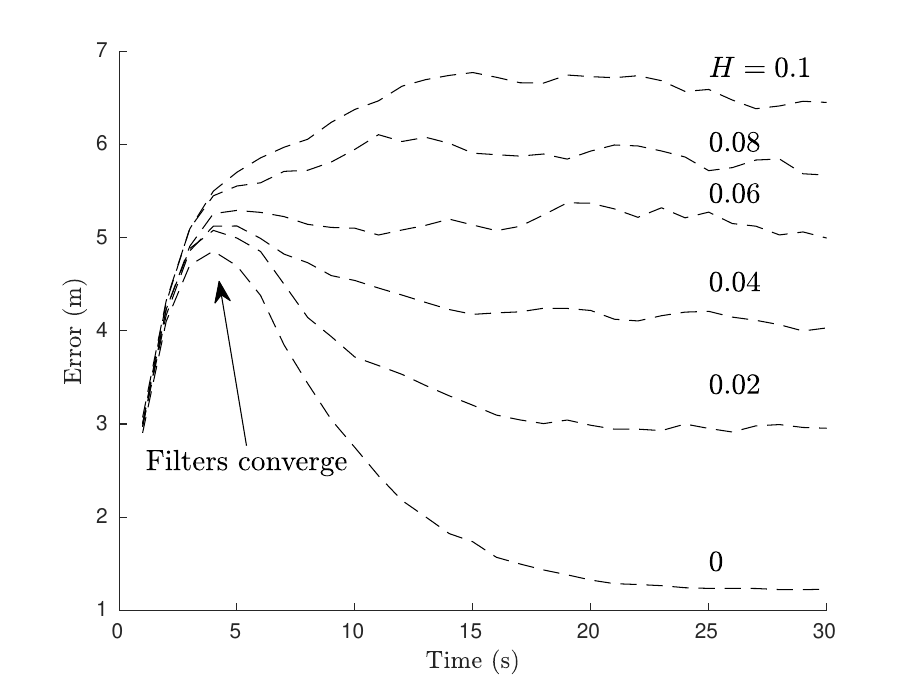}
    \caption{As the entropy rate of a target increases, tracking error decreases if the update rate is held constant. }
    \label{fig:acc_vs_man}
\end{figure}

\subsection{Distributed Rate Limits}
Another naive (but fixed) policy would cause every node to update on a round-robin style schedule (i.e., if the FC update capacity per second is $C$, each node $n\in\mathcal{N}$ sends an update every $\frac{C}{N}$ seconds). 
This would evenly distribute the capacity through the network. 
However, as discussed in the previous section, some targets (and therefore some nodes) will require more frequent updates than others. 
The question then becomes: How should the capacity be distributed through the network? 

The Age of Incorrect Information \cite{AgeOfIncorrectInformation} gives us some direction. 
AoII provides a policy $\phi$ which determines when an observer should send a target update to an aggregator, assuming one of each object. 

The AoII formulation assumes that information from a single process must be collected by a single observer for transmission to an aggregator. 
However, in our problem, information from multiple processes are collected by multiple observers and efficiently relayed to the FC. 
So, there are two fundamental differences between our problem and AoII that must be accounted for. 
Firstly, each node observes possibly multiple targets. 
This can be addressed by extending the one-target Markov motion model to a multi-target model. 
Secondly, more than one node might make observations of a single target. 
Therefore, each node will be unaware of the true state of the FC and so might provide more updates than necessary. 
We address this by using the FC channel to communicate to each node \revision{the targets for which} it is ``responsible''. 

\subsubsection{\revision{Target Assignment}}
After time $t$, the FC has received updates on targets $\mathcal{M}^{(t)}$, Eq. (\ref{eq:FC_targs}). 
The FC can determine $\mathcal{N}_m = \{n \; \text{s.t.} \; m\in\hat{\mathcal{X}}_n^{(t)}\}$ for each target $m\in\mathcal{M}^{(t)}$. 
This is the set of nodes which can track target $m$. 
Then, 
\begin{equation}
\label{eq:closest_node}
    n^*_m = \argmin_{n\in\mathcal{N}_m}\left(||X_m - X_n||_2\right)
\end{equation}
represents the closest of the nodes which can see target $m$. 
Finally, the FC can use the feedback channel to inform node $n$ of all targets $m$ with $n^*_m=n$. 
Define $\mathcal{M}^*_n = \{m \; \text{s.t.} \; n^*_m=n\}$ to be the set of all targets $m$ for which node $n$ is the closest node able to observe $m$. 

Let a node $n$ observe targets $\hat{\mathcal{M}}_n$ with $\#(\hat{\mathcal{M}}_n) = M_n$. 
Let $s$ be a Markov motion model state (i.e., constant-velocity, constant-turn, etc. ). 
Clearly, $\mathcal{M}^*_n \subseteq \hat{\mathcal{M}}_n$. 
Also, note that a Markov chain with states $\{s^1_1, s^1_2, \dots, s^1_w\}$ and a Markov chain with states $\{s^2_1, s^2_2, \dots, s^2_z\}$ can be combined to form a Markov chain with $w\cdot z$ states: 
\begin{equation*}
    \{(s^1_1,s^2_1), (s^1_1,s^2_2), \dots, (s^1_1,s^2_z), (s^1_2,s^2_1), \dots, (s^1_w,s^2_z)\}
\end{equation*} 
Also note that the probability of transitioning from state $(s^1_1,s^2_1)$ to state $(s^1_2,s^2_5)$ is $P_{s^1_1, s^1_2}P_{s^2_1, s^2_5}$. 

\subsubsection{\revision{Distributed Policy}}
Now, node $n$ can determine 1) the targets for which node $n$ is responsible and 2) the Markov chain which models those targets. 
Recall \revision{the} estimated motion state for target $m$, $\hat{\gamma}_m(t)$. 
Node $n$ can estimate the probability that target $m$ transitions from state $i$ to state $j$ as
\begin{align}
    P_{i,j}^m &= \frac{\sum_{\revision{t\in\mathcal{T}^*_{m,n}}} \delta_{i,j}^m(t)}{\sum_{\revision{t\in\mathcal{T}^*_{m,n}}} \mathbbm{1}_{\hat{\gamma}_m(t)=j}}\\
    \delta_{i,j}^m(t) &= \begin{cases}
        1, & \hat{\gamma}_m(\revision{\max{(\mathcal{T}^*_{m,n}<t)}}) = i \;\text{\&} \;\hat{\gamma}_m(t) = j\\
        0, & \text{else}
    \end{cases}
\end{align}
\revision{where $\mathcal{T}^*_{m,n}$ are time steps where node $n$ detects target $m$. }
Then, node $n$ forms a Markov chain $\hat{\Gamma}_n$ of $2^{\#(\mathcal{M}^*_n)}$ states and can estimate each transition probability as Eq. (\ref{eq:big_trans}). 
\begin{equation}
    \label{eq:big_trans}
    P_{i,j} = \prod_{i=1}^{M_n} P_{i,j}^m
\end{equation}
This estimate improves as $t\to\infty$. 

Let the penalty function Eq. (\ref{eq:AoII}) be defined by
\begin{align}
    f(t) &= t-t^*_n\\
    g(\mathbf{\hat{X}}(t), \overline{\mathbf{X}}(t)) &= \begin{cases}
        0, & \hat{\Gamma}_n(t) = \hat{\Gamma}_n(v_n)\\
        1, & \text{else}
    \end{cases}
\end{align}
recalling that $v_n$ is the most recent time for which $n\in \mathcal{N}^{(t)}$. 
\revision{Since the estimated motion model is not updated when detections are missed, the penalty function Eq. (\ref{eq:AoII}) does not increment and an update is not likely. }
The FC's estimate of the motion model of each target observed by node $n$ can be estimated by $\hat{\Gamma}_n(v_n)$, the estimated motion model at the last time that node $n$ provided an update.

The procedure outlined in \cite{AgeOfIncorrectInformation} provides a threshold policy which is Bellman-optimal for this situation. 
In particular, a threshold $p_0$ is determined\footnote{The optimal threshold is determined by \cite{AgeOfIncorrectInformation}, Algorithm 1: ``Optimal Threshold Finder. ''} as a function of the Markov chain transition probabilities and the update rate constraint. 
When the penalty function Eq. (\ref{eq:AoII}) is equal to $p_0$, an update occurs with a probability\footnote{$A(n)$ is given as \cite{AgeOfIncorrectInformation}, Eq. (32). } $\rho_a$, and when the penalty is $p_0+1$, an update occurs with a probability $\rho_b$. 

\revision{
\begin{equation}
    \rho_a = \frac{\alpha - A(p_0+1)}{A(p_0)-A(p_0+1)}
\end{equation}
\begin{equation}
    \rho_b = \frac{A(p_0)-\alpha}{A(p_0)-A(p_0+1)}
\end{equation}
}

Note that while only the targets in $\mathcal{M}^*_n$ are used to determine update times, updates consist of all targets observed by node $n$, since resource blocks are allocated such that all targets observed by a node may be updated simultaneously. 
This means that the FC may, in fact, have more recent information on a given target than node $n$ is aware. 
However, since there is a one-to-one mapping from each target to its closest node, each node must assume that it is the only one providing updates on each target and that it provides the most accurate updates.

Now, to determine the appropriate update constraint, first note that an AoII policy will not be a fixed rate policy since the FC does not have control over the size of $\mathcal{N}^{(t)}$. 
The best that can be hoped for is a mean of $C$. 
An AoII policy is however constrained, under Definition \ref{def:constrained}.

\begin{lemma}[Constraint Equivalence]
\label{thm:equiv}
    If the AoII constraint is 
    \begin{equation}
        \delta = \frac{\alpha}{M_n}
    \end{equation}
    then the average number of updates will be $C$. 
\end{lemma}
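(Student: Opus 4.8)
The plan is to reduce the claim to a computation of the expected number of nodes that transmit in one update interval, $\mathbb{E}[\#(\mathcal{N}^{(t)})]=\mathbb{E}\left[\sum_{n=1}^{N}\Psi^\phi_n(t)\right]$, and to show this equals $C$ in steady state. This parallels the argument of Lemma \ref{lem:fixed}, the only difference being that here each node's rate is not pinned at $C/N$ by fiat but is instead produced by the AoII threshold policy. The heart of the proof is therefore to determine the long-run transmission rate of a single node and then to add these rates across the network.

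First I would invoke the AoII result of \cite{AgeOfIncorrectInformation}: the optimal constrained threshold policy, together with its randomization probabilities $\rho_a$ and $\rho_b$, is constructed so that the update-rate bound of Definition \ref{def:constrained} is met with equality, so each node's long-run transmission rate coincides with its budget. Second, I would convert the per-target constraint $\delta$ into the node's aggregate budget. Node $n$ runs a single policy on the product chain $\hat{\Gamma}_n$ assembled from its $M_n$ two-state target chains; since an informative transition of the product state is the superposition of the $M_n$ per-target informative transitions, and each target carries budget $\delta$, the node-level transmission budget is $M_n\delta$. Substituting $\delta=\alpha/M_n$ cancels the $M_n$, so every active node transmits at rate $M_n\delta=\alpha$ irrespective of how many targets it observes. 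Third, I would sum over the network and average over the Poisson node count exactly as in Lemma \ref{lem:fixed}: $\mathbb{E}\left[\sum_{n=1}^{N}\alpha\right]=\alpha\,\mathbb{E}(N)=\frac{C}{\lambda_n|B|}\cdot\lambda_n|B|=C$.

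The main obstacle is the second step: rigorously justifying that the per-target budget $\delta$ aggregates to a node-level rate of $M_n\delta$ rather than $\delta$, which is precisely the single-process-to-multi-target extension the paper must make, since \cite{AgeOfIncorrectInformation} treats only one observer and one source. I would support it through the additive structure of the product chain, whose transition (entropy) rate is the superposition of the $M_n$ independent per-target rates, so that to first order the informative transitions—and hence the triggered updates—superpose additively, and I would confirm that the constraint remains binding in this regime. I would also dispatch two minor caveats cleanly: nodes with $M_n=0$ contribute no updates, so the effective sum runs over active nodes (consistent with $\lambda_n<\lambda_m$ making coverage typical), and the equality relies on operating in the rate-limited regime $C<N$ where the AoII budget is actually active.
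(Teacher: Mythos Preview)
Your argument is correct but follows a different route from the paper's. The paper takes a stochastic-geometry detour: it first derives the nearest-node distance distribution from the void probability of the node PPP, obtains $\mathbb{E}[R]=\tfrac{1}{2\sqrt{\lambda_n}}$, and from this computes $\mathbb{E}[M_n]=\tfrac{\pi}{16}\,\tfrac{\lambda_m}{\lambda_n}$, before concluding that the per-node update probability is $\mathbb{E}[M_n\delta]=\alpha$. You bypass all of this, invoking instead the rate-matching property of the constrained AoII threshold policy, the additive aggregation $M_n\delta=\alpha$ via the product-chain structure, and the Poisson summation of Lemma~\ref{lem:fixed}. Your route is more direct and arguably tighter: the paper's intermediate quantities $\mathbb{E}[R]$ and $\mathbb{E}[M_n]$ are never actually invoked in its final equality chain (which rests solely on $M_n\delta=\alpha$ for the realized $M_n$), and the paper's step $\mathbb{E}\bigl[\prod_{i=1}^{M_n}\delta\bigr]=\mathbb{E}[M_n\delta]$ is at best a notational shorthand for precisely the aggregation you spell out and flag as the crux. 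What the paper's detour does buy is an explicit closed-form for $\mathbb{E}[M_n]$ in terms of $\lambda_m,\lambda_n$, which would be useful if $\delta$ had to be designed from network densities rather than from each node's observed target count; for the lemma as stated, however, your argument suffices.
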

\begin{proof}
    See Appendix \ref{app:lem2}
\end{proof}


\section{Simulations and Analysis}
\label{sec:simulations}

\subsection{Baseline Approaches}
In addition to the Age of Information and Age of Incorrect Information policies above, we also provide baseline results for comparison purposes using a multi-armed bandit algorithm, random selection, and a round-robin style approach. 
Multi-armed bandits are often implemented in this style of iterative, online problem solving. 
We select the Upper Confidence Bound (\textbf{UCB}) due to its general-purpose use \cite{UCB_fischer}. 
UCB considers the variance of each target track, and selects nodes accordingly in an attempt to balance exploration of nodes and exploitation of low-variance tracks. 
This is an example of a track-greedy selection algorithm: there are \emph{many} targets in the environment, and an algorithm which is over-fitted to track variance may be susceptible to poor performance due to neglecting high-age tracks or rarely sampled nodes. 

At each update time $t$, the FC chooses nodes according to Eq. (\ref{eq:UCB_node_selection}), where $N_t(n)$ is the number of times $n$ has been selected before time $t$. 
\begin{equation}
\label{eq:UCB_node_selection}
    \mathcal{N}_{UCB}^{(t)} = \argmin_{\mathcal{N}^{(t)} \in \mathcal{N}} \left[\sum_{m\in\mathcal{M}} F_{n,m} + \sqrt{\frac{\log t}{N_t(n)}} \right]
\end{equation}

The second alternative approach we demonstrate is random node selection. 
In each time step, the FC samples the appropriate number of nodes at random and without replacement. 
This represents a strategy that seeks to evenly spread the network capacity over the nodes; each node will be selected equally often. 
Unfortunately, due to the stochastic nature of the network, some nodes will may require more frequent updating and some less. 

Lastly, the round-robin style approach selects the $C$ nodes which have the oldest updates. 
All of these approaches are fixed policies and therefore are constrained, meeting the condition of Definition \ref{def:constrained}.

\subsection{Results}
\begin{table}
    \centering
    \caption{Simulation Parameters}
    \begin{tabular}{c|c|c}
        Variable & Description & Value \\[0.5ex] 
        \hline
        $\lambda_n$ & Node Density per \qty{}{\km\squared} & 0.2 \\
        $\lambda_m$ & Target Density per \qty{}{\km\squared}& 0.3 \\
        $|B|$ & Simulated Region \revision{Area} & \qty{100}{\km\squared} \\
        \revision{$|S_n|$} & \revision{Observable Region Area} & \revision{\qty{10}{\km\squared} }\\
        $C$ & Update Capacity & 2 \\
        N/A & Averaged Simulations & 120 \\ [1ex] 
    \end{tabular}    
    \label{tab:params}
\end{table}
First, we should note that since each policy is constrained (Def. \ref{def:constrained}), they all meet the resource constraint. 
Figure \ref{fig:selected_nodes} demonstrates the average utilized capacity over many averaged simulations. 
Since the proposed distributed algorithm is not fixed (Def. \ref{def:fixed}), sometimes more or less than $C$ nodes provide updates simultaneously. 
This is acceptable since the network can use on average $C$ resource blocks for updates. 
Note that since $C=2$ and $\lambda_n=0.2$, 
\begin{equation}
    \alpha = \frac{C}{\lambda_n|B|} = 0.1
\end{equation}
is the update rate per node. 
Simulation parameters are listed in Table \ref{tab:params}. 

\begin{figure}
    \centering
    \includegraphics[scale=0.55]{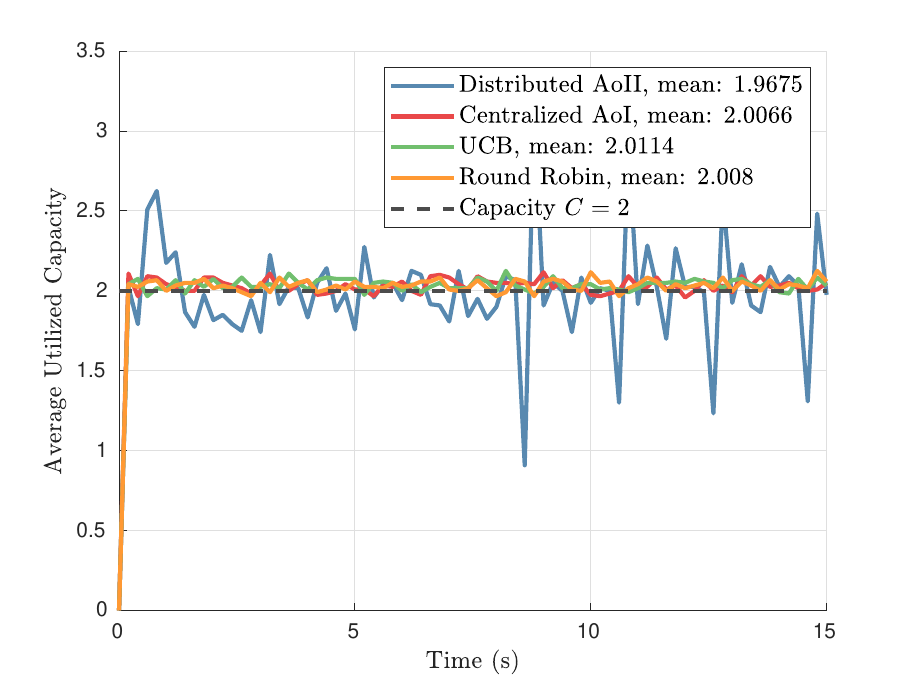}
    \caption{Each algorithm is constrained and therefore meets the average capacity $C=2$. }
    \label{fig:selected_nodes}
\end{figure}

\revision{
\paragraph{Missed Detections and False Alarms}
Radar detection and tracking problems contain an implicit trade-off between missed detections and false alarms. 
Missed detections occur when a target response falls below the detection threshold, and false alarms occur when the clutter or noise response is above the detection threshold. 
Of these, missed detections are more problematic, so detection threshold values which provide $P_{FA}<10^{-4}$ are reasonable \cite{richards2012principles}. 
We show in Figure \ref{fig:pd} that the performance of the distributed AoII policy is only mildly degraded when $P_D=0.9$ and $P_{FA}=10^{-3}$. 
This is because, under the AoII policy, nodes which miss detections at a time step $t$ do not have fresh information and are unlikely to provide an update. 
Since the impact of missed detections and false alarms is minimal, we set $P_D=1$ and $P_{FA}=0$ for the remainder of the results. 
}

\begin{figure}
    \centering
    \includegraphics[scale=0.55]{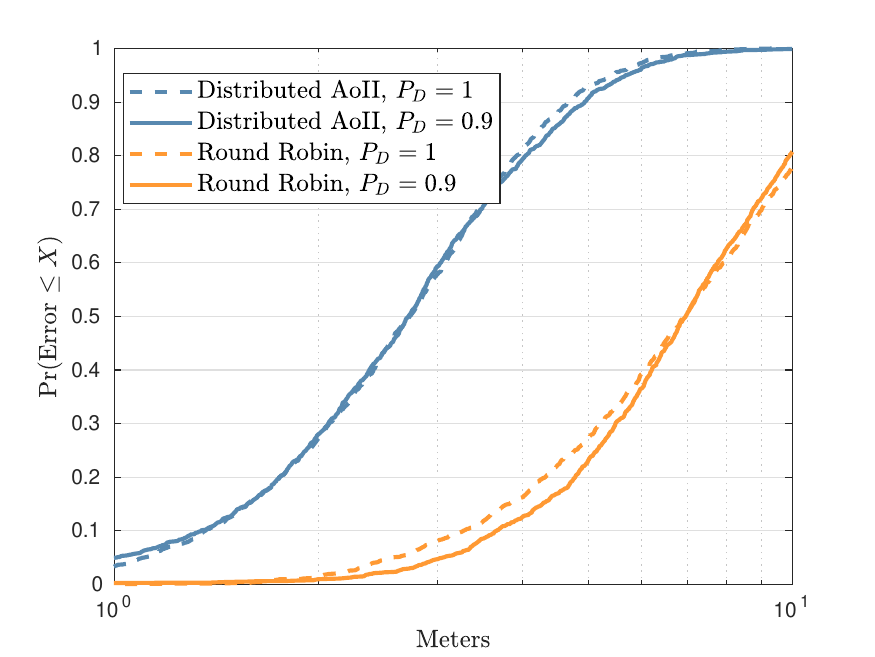}
    \caption{\revision{Tracking performance when the probability of detection is reduced to 0.9 with a capacity of 1. The performance of Round Robin is somewhat degraded, but the performance of the distributed AoII policy does not suffer much. This is because when a node misses a detection, it is much less likely to provide an update (as its own information is less fresh). }}
    \label{fig:pd}
\end{figure}

Each policy is constrained, so the average network utilization rate is constant between policies. 
This does not mean, however, that all targets are updated at the same rate. 
In fact, as Fig. \ref{fig:scatter} shows, the distributed AoII policy updates targets which have higher entropy rates more often, while the round robin policy updates targets at the same rate independent of the motion model entropy rate. 
\revision{
This figure shows several properties. 
First, the AoII policy is able to provide a higher update rate for all targets by allocating more resources to those nodes which observe more targets. 
Second, those targets with higher entropy rate receive more updates than those with lower entropy rates. 
These properties interact to somewhat suppress the effect of higher entropy rate: each node sees multiple targets, entropy rate is uniformly distributed among targets, and nodes seeing more targets get more resources. 
An update rate of ``1'' means that a target is updated by one node per update period. 
}
\begin{figure}
    \centering
    \includegraphics[scale=0.55]{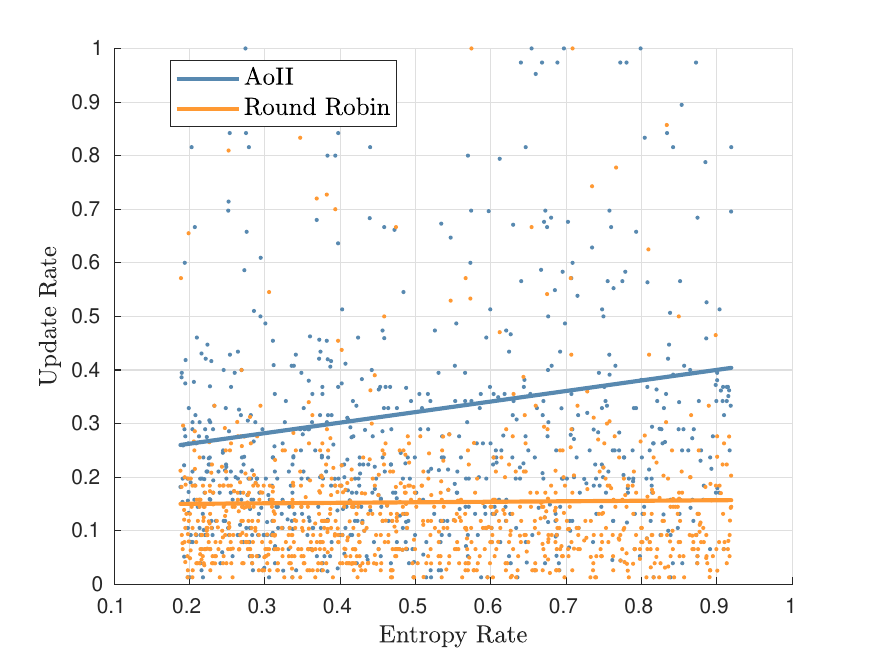}
    \caption{A scatter plot and best-fit line relating the entropy rate of target motion models to the rate at which the FC receives target updates. \revision{Targets are uniformly distributed entropy rates in $[0.2, 0.8]$. }}
    \label{fig:scatter}
\end{figure}

Since each algorithm uses the same number of resource blocks on average, it might be assumed that each algorithm also exhibits the same tracking performance. 
This is not the case, since some algorithms will make more efficient use of the resource blocks, as well as choosing more carefully which nodes provide updates at which times. 
Figure \ref{fig:algo_comp} demonstrates that the UCB and Round Robin algorithms both perform poorly. 
This is due to the reward-greedy behavior exhibited by UCB and the uniform selection probability of Round Robin. 
UCB tends to preferentially select low-variance tracks, and avoid high-variance tracks, due to the reward function formulation. 
This leads to overall \emph{higher} tracking error, since high-variance tracks will age, causing the FC estimate to become even worse.

The AoI-inspired timely algorithm performs better, since it takes into account the age of the tracks reported by each node, and is able to allocate more updates to those nodes which provide more fresh, low-variance information. 
The AoII algorithm performs even better since it removes the timely algorithm's need to explore nodes to discover new tracks; nodes are able to decide when to send their own updates given the update constraint. 

\begin{figure}
    \centering
    \includegraphics[scale=0.55]{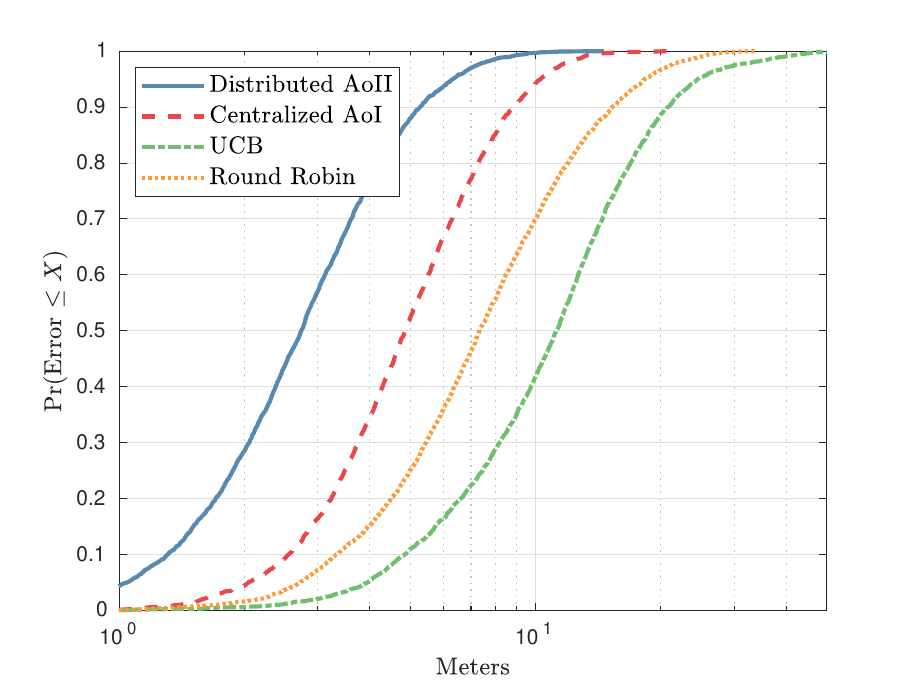}
    \caption{Error distributions for different selection algorithms. Since Round Robin and Random each select nodes with a constant frequency, they exhibit similar performance. 
    Since the centralized AoI and distributed AoII metrics incorporate track age, they outperform the other techniques. 
    Since the AoII approach further allows each node to decide when to provide updates, it achieves nearly double the probability of less than 100 meter accuracy while meeting the same spectrum usage. }
    \label{fig:algo_comp}
\end{figure}

As demonstrated by the above results, efficient use of the available resources results in lower tracking error. 
With this in mind, we next consider how tracking performance varies with changing constraints. 
Fig. \ref{fig:capacity_comp} shows us that increasing the capacity $C$ results in increased tracking performance for both the AoII algorithm and Round Robin. 
However, the performance gap between these two algorithms also varies. 
When the resource constraint is high, the timing of updates matters less since more updates can be transmitted. 
When the resource constraint is low, however, the update timing becomes more important. 
The difference in AoII and Round Robin performance under a low resource constraint is nearly an order of magnitude in tracking error. 
But, when the resource bound is high, they perform nearly identically.

\begin{figure}
    \centering
    \includegraphics[scale=0.55]{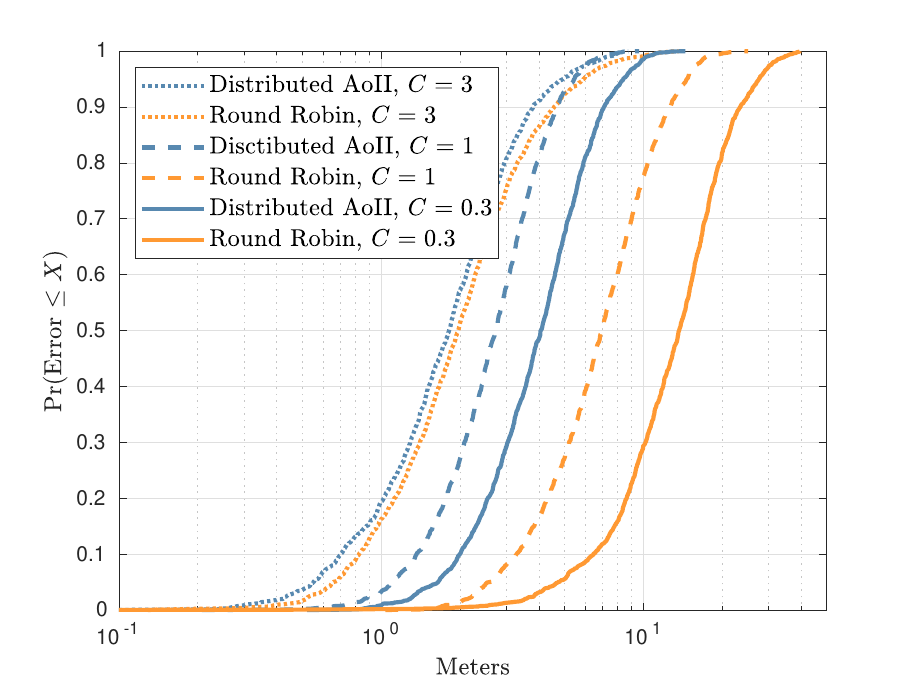}
    \caption{The frequency with which the FC receives updates directly impacts the track error. 
    In addition, the performance gap between AoII and random selection increases with decreasing capacity. 
    When less capacity is available, the selection algorithm quality becomes more important. The capacity constraint $C$ varies from \revision{0.3 to 3}. }
    \label{fig:capacity_comp}
\end{figure}

A critical value for this type of tracking problem is the \emph{peak age of information} (\textbf{PAoI}). 
As opposed to the AoII, this is not a value we used in decision-making, but is a tool for analysis of policies. 
It denotes the average maximum age of a process upon updating - how long the FC must estimate the location of a target before an update is provided. 
We can denote the PAoI as
\begin{equation}
    \Delta^P(t) = \lim_{\tau\to\infty}\frac{1}{U(\tau)}\sum_{n=1}^{U(\tau)}A_u - \mathbb{E}(A_u)
\end{equation}
where there are $U(\tau)$ updates before $t=\tau$, and $A_u$ is the age of the process at the $n^{th}$ update. 

We can expect that a policy which is more egalitarian towards targets will provide a moderate PAoI - each target would get updated evenly often, and so no targets will go extremely long without updating. 
An age-greedy policy, one which selects nodes such that the maximum-age targets are always updated, might provide a lower PAoI. 
However, a policy which takes into account the differing needs of each target may (somewhat paradoxically) provide a \emph{higher} PAoI, if not all targets need similar update rates. 
In Figure \ref{fig:peak_age} we see that the AoII policy outperforms the others in terms of peak age. 
This is because this technique relies on each node to determine when an update is needed rather than rely on the FC. 
The timely policy also performs well, since it optimizes for track age as well as variance. 

\begin{figure}
    \centering
    \includegraphics[scale=0.55]{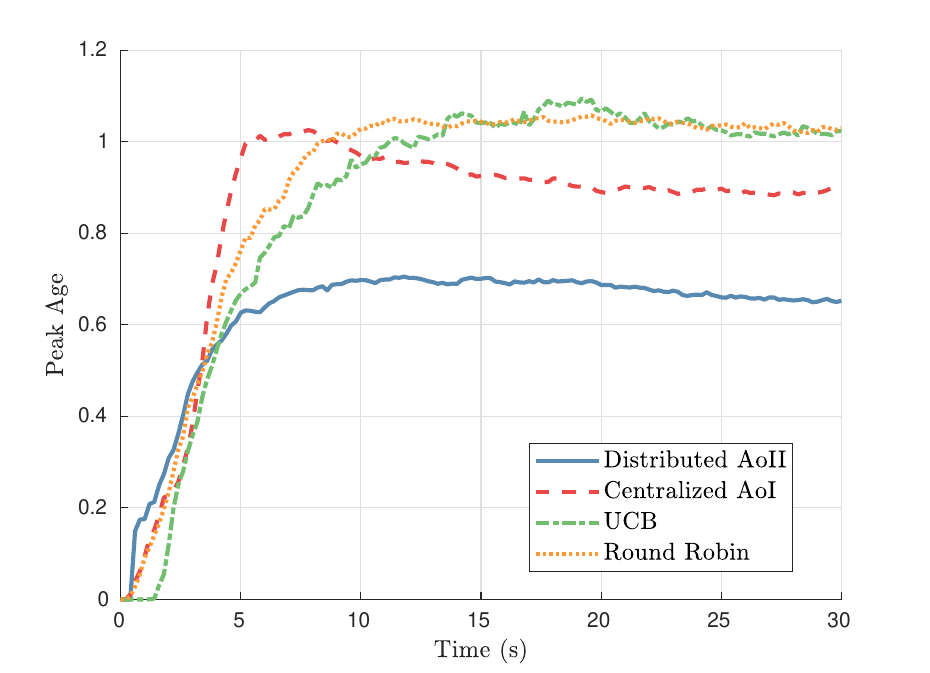}
    \caption{Peak age averaged over all active tracks. The distributed AoII policy exhibits the best performance, followed by the centralized AoI policy. Since the centralized approach first explores nodes which observe many targets, it does not revisit targets before all nodes are updated. }
    \label{fig:peak_age}
\end{figure}

Another way to understand the impact of the capacity constraint is by examining the mean age of active tracks at the FC. 
In Fig. \ref{fig:mean_age}, we see that higher-capacity networks are able to maintain a lower mean age. 
This is simply because of the quantity of targets the FC can update. 

\begin{figure}
    \centering
    \includegraphics[scale=0.55]{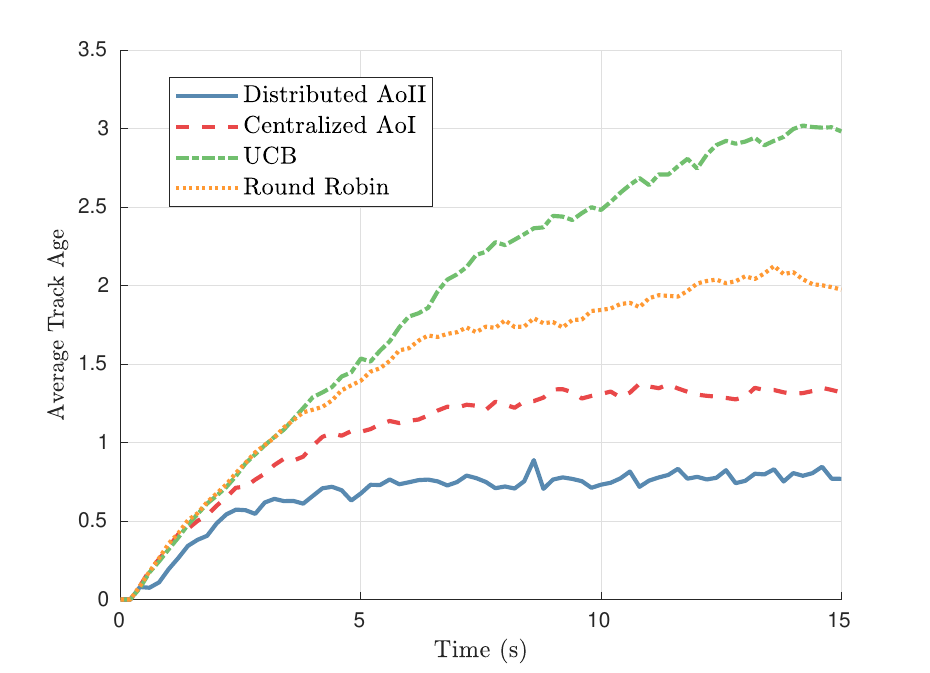}
    \caption{Mean age of each active target track at the FC. 
    Since AoII updates nodes according to track age, it most effectively minimizes average age at the FC. }
    \label{fig:mean_age}
\end{figure}

\revision{W}e can inspect the number of targets which are observable to the network $(m$ s.t. $m \in \cup_{n\in\mathcal{N}}S_n)$ but which are not tracked by the FC, shown in Fig. \ref{fig:total_tracks}. 
This is caused when nodes do not provide updates in a timely manner when new targets appear. 
As the plot demonstrates, the number of missed targets is low for all considered algorithms. 
This is because even those algorithms which underperform in tracking error will still tend to receive updates from all nodes over enough time. 
Notably, the UCB algorithm has the lowest number of untracked targets. 
This is due to UCB rewarding exploration early in the game - it is more likely to select nodes that have not been selected before, so it will initiate a track for all observed targets early in the simulation. 
One key take-away, however, is that UCB's tracking error performance is worse than the AoI based algorithms. 
So, it is not necessarily beneficial to maintain target tracks which receive infrequent updates. 

Lastly from this figure we can see the number of targets which exist in the environment but are out of range of the network (i.e., no radar is able to track them). 
This is a consequence of the random nature of the network. 
Eq. (\ref{eq:unobserved}) shows the probability distribution of the number of targets which are out of range of the network. 
Evaluated on the densities of the network shown, the mean number of unobservable targets is close to 4, which matches the difference between the total targets and the covered targets. 

\begin{figure}
    \centering
    \includegraphics[scale=0.55]{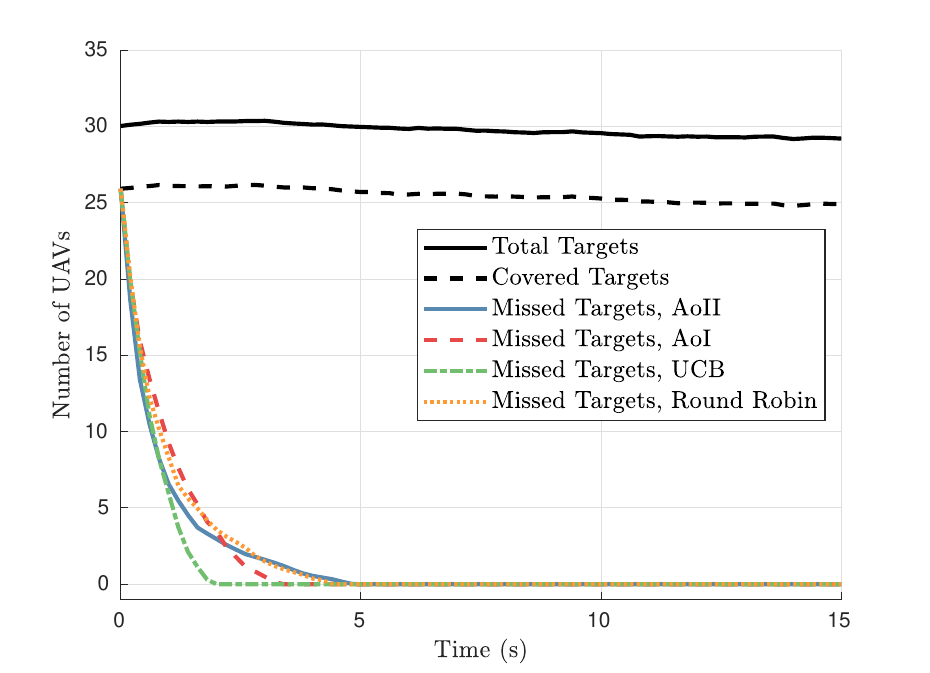}
    \caption{Number of missed targets for each algorithm. Due to the stochastic nature of each network, the number of unobservable targets will be distributed according to Eq. (\ref{eq:unobserved}) which with $\lambda_m=0.3$, $\lambda_n=0.2$ has a mean close to 4. }
    \label{fig:total_tracks}
\end{figure}
%


\section{Conclusions}
\label{sec:conclusions}
In this work, we demonstrated centralized and distributed node selection algorithms for a cognitive radar network which efficiently use the limited communication resources available to the network. 
Each of several radar nodes must observe many targets and efficiently provide updates to the FC. 
The UAV targets exhibit Markov chain motion model behavior, which is exploited to provide more efficient updates. 

We presented a centralized, track-sensitive AoI-inspired node selection algorithm, which utilizes a polling procedure to determine which nodes have observed a change in motion model. 
This method was shown to be superior to alternative techniques such as the UCB multi-armed bandit and random node selection. 

Then, we developed a distributed technique based on the Age of Incorrect Information through which each node can independently determine when to provide updates to the FC. 
This required modification of the AoII criteria developed in \cite{AgeOfIncorrectInformation} to accommodate the multi-target, multi-node scenario. 
We showed a resource constraint that is identical in both distributed and centralized formulations. 
This technique was also shown to be superior to alternative approaches. 

In real target tracking systems, this type of optimization could result in simpler track management and lower communication requirements. 
Due to the lower track age demonstrated by our proposed techniques, tracking performance in realistic systems should be expected to increase, especially for large numbers of targets. 

A possible extension to this work would include ``mode control,'' by which the FC or the nodes could determine how to allocate radar observation time and whether to perform passive sensing, such as direction of arrival estimation or signal classification. 
Such an extension would possibly enable a tracking network to be more power-efficient as well as to reduce the radiated power, allowing the nodes in the network to attain a \revision{lower probability of interception (i.e., lower chance of being detected by other systems).}


\bibliographystyle{IEEEtran}
\bibliography{bibli}

\appendix
\section*{\revision{Proof of Theorem \ref{thm:equiv}}}
\label{app:lem2}
\setcounter{lemma}{1}
\begin{lemma}[Constraint Equivalence]
    If the AoII constraint is 
    \begin{equation}
        \delta = \frac{\alpha}{M_n}
    \end{equation}
    then the average number of updates will be $C$. 
\end{lemma}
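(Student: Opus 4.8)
The plan is to translate the per-target AoII rate constraint $\delta$ into network-wide resource consumption and then match it to the capacity $C$. The argument rests on two facts: the Bellman-optimal threshold policy of \cite{AgeOfIncorrectInformation} meets its rate constraint with equality, and a single transmission by node $n$ bundles all of the targets that node is responsible for. First I would invoke the tightness of the optimal policy: operating under a per-process constraint $\delta$, the long-run average transmission rate of the optimal threshold policy equals $\delta$, so that Definition \ref{def:constrained} holds with equality. This pins the per-process update rate to $\delta$ and anchors the remaining rate bookkeeping.

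Next I would compute the per-node transmission rate. Because the $M_n$ per-target motion chains are independent and identically distributed (Assumption \ref{ass:motion}) and the combined chain $\hat{\Gamma}_n$ is their product, the informative transitions that trigger updates form the superposition of $M_n$ independent per-target update streams, each operating at rate $\delta$. Counting these update demands additively, the expected number of updates node $n$ generates per CPI is $M_n\delta$; for the small rates of interest the chance that two targets trigger in the same CPI is second order and does not affect this first-order count. Hence the per-node update rate is $M_n\delta$.

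Finally I would lift the single-node rate to the network by linearity of expectation. Since the node locations form a PPP with $\mathbb{E}(N)=\lambda_n|B|$, the expected number of resource blocks consumed per CPI is $\mathbb{E}(N)\cdot M_n\delta=\lambda_n|B|\,M_n\delta$. Setting this equal to the capacity $C$ and solving yields $\delta=C/(\lambda_n|B|M_n)$, which by the definition $\alpha=C/(\lambda_n|B|)$ in Eq. (\ref{eq:alpha}) is precisely $\delta=\alpha/M_n$, as claimed.

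The step I expect to be the main obstacle is the second one: showing that the node's update rate is $M_n$ times the per-target constraint rather than equal to it. The delicacy is that all $M_n$ targets share a single transmission, so one must argue carefully, using independence of the chains and the product structure of $\hat{\Gamma}_n$, that the combined chain's informative transitions occur at $M_n$ times the single-target rate and that coincident triggers contribute only at higher order. A secondary bookkeeping point is the distinction between the decision-driving set $\mathcal{M}^*_n$ and the full observed set $\hat{\mathcal{M}}_n$: since only the former governs the update rule, the count $M_n$ should be read as the number of targets actually driving node $n$'s policy.
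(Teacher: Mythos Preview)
Your argument is correct and reaches the same conclusion as the paper, but the route differs in one substantive way. The paper spends most of its proof on a stochastic-geometry computation: it derives the nearest-node distance distribution from the node PPP, obtains $\mathbb{E}[R]=1/(2\sqrt{\lambda_n})$, and from that computes $\mathbb{E}[M_n]=(\pi/16)\,\lambda_m/\lambda_n$, the expected number of targets for which a given node is the closest observer. Only after this does it do the rate-counting step you call step two, writing the per-node update probability as $M_n\delta$ and identifying it with $\alpha$. You bypass the PPP nearest-neighbor analysis entirely and go straight to the rate bookkeeping, treating $M_n$ as given in the lemma statement and closing by multiplying by $\mathbb{E}(N)=\lambda_n|B|$. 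For the lemma \emph{as stated} your shortcut is sufficient and arguably cleaner, since $M_n$ already appears explicitly in the claimed constraint; the paper's geometric derivation buys an explicit formula for $\mathbb{E}[M_n]$ in terms of the network densities, which is useful context but not logically required. Your superposition justification for $M_n\delta$ (independent chains, first-order in $\delta$, coincident triggers higher order) is also more carefully articulated than the paper's corresponding line, which passes from a product expression to $M_n\delta$ without comment.
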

\begin{proof}
    Recall that the target density is $\lambda_m$, the node density is $\lambda_n$, and the capacity $C = \alpha\lambda_n|B|$ is the desired number of updates per CPI. 
    Let R be the variable describing the distance from a point $x\in B$. 
    Let $M_n$ be the random variable describing the number of targets closest to any given node. 
    For any given point in $B$, the probability that no nodes are within a radius $r$ is 
    \begin{equation}
        P(N=0) = e^{-\lambda_n \pi r^2}
    \end{equation}
    and therefore that the probability that at least one node is with a radius $r$ is 
    \begin{equation}
    \label{eq:exp_cdf}
        P(N\geq 1) = 1-e^{-\lambda_n \pi r^2}
    \end{equation}
    Then, the PDF of the distance to the nearest node is
    \begin{align}
        f(r) &= \frac{d}{dr}\left(1-e^{-\lambda_n\pi r^2}\right)\\
        &= 2\lambda_n\pi r e^{\lambda_n\pi r^2}
    \end{align}
    The expected value of this distribution can be found as 
    \begin{align}
        \mathbb{E}[R] &= \int_0^\infty r f(r) dr\\
        &= \int_0^\infty \lambda_n \pi r^2 e^{-\lambda_n \pi r^2}
    \end{align}
    which is a Gaussian integral with solution
    \begin{equation}
        \mathbb{E}[R] =\frac{1}{2\sqrt{\lambda_n}}
    \end{equation}
    This, then, is the expected distance from any point to the nearest node. 

    Now, the expected number of targets for which a given node is closest is given as the number of targets inside the disc of half this radius. 
    \begin{align}
        \mathbb{E}[M_n] &= \lambda_m \pi \left(\frac{1}{4\sqrt{\lambda_n}}\right)^2\\
        &= \frac{\pi}{16} \frac{\lambda_m}{\lambda_n}
    \end{align}

    Finally, the probability node $n$ provides an update in the CPI ending at time $t=\tau$ is $\delta$. 
    We have: 
    \begin{align}
        \mathbb{E}[P(n\in\mathcal{N}^{(\tau)})] &= \mathbb{E}[\prod_{i=1}^{M_n} \delta]\\
        &= \mathbb{E}[M_n\delta]\\
        &= \alpha
    \end{align}

\end{proof}

\begin{IEEEbiography}[{\includegraphics[width=1in,height=1.25in,clip,keepaspectratio]{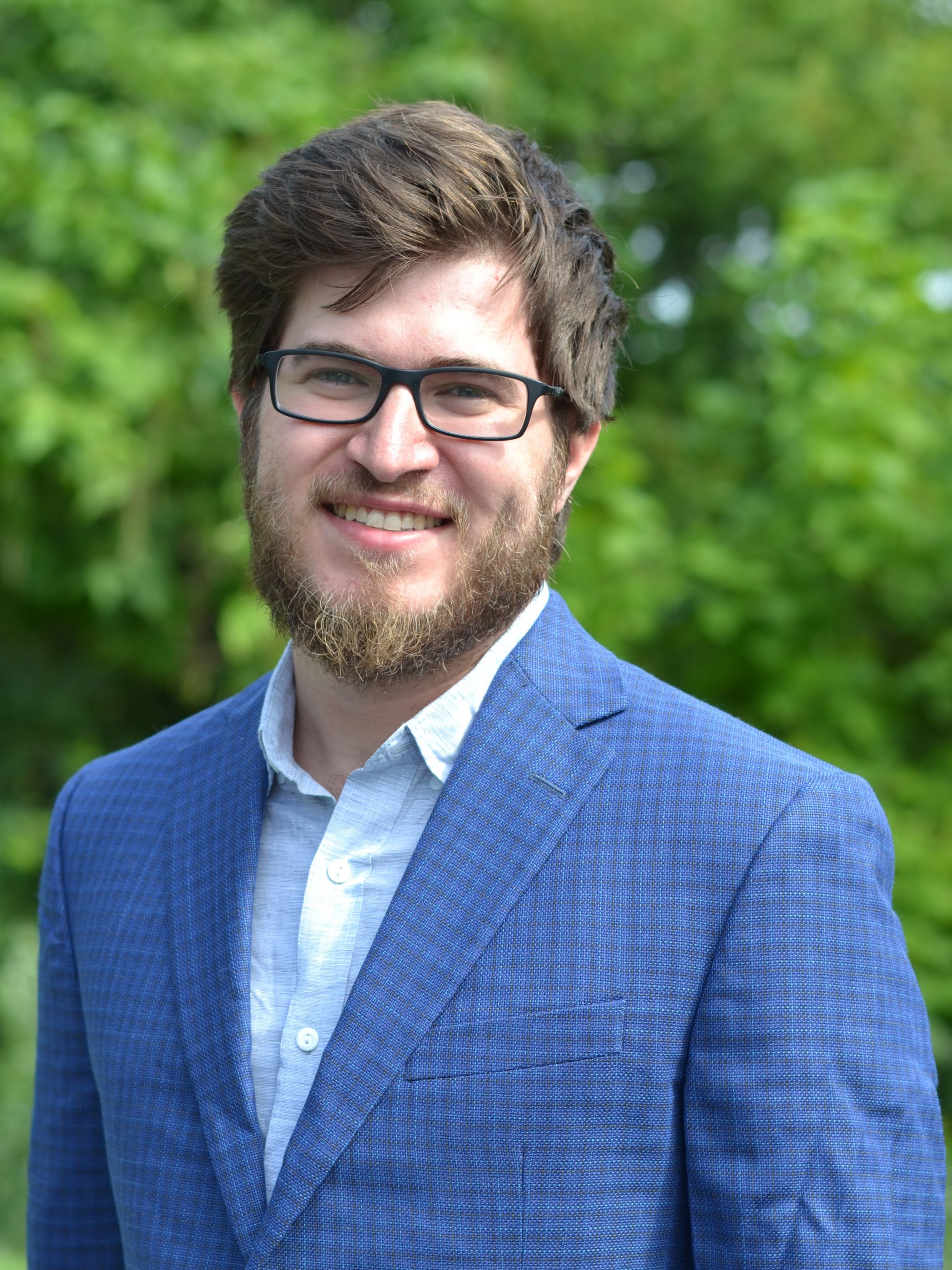}}]%
{William W. Howard}
received the B.S. degree in Electrical Engineering from West Virginia University in 2019 and the M.S and Ph.D. degrees in electrical engineering from Virginia Tech in 2022 and 2023 respectively. 
While at Virginia Tech he was a member of the Wireless@VT research group. 
While in graduate school he worked for Applied Signals Intelligence, Inc. as a signal processing engineer. 
He is currently a research engineer with Rincon Research Corp. in Denver, CO. 
His research interests include array signal processing, direction of arrival estimation, localization, and distributed learning for radar systems. 
In particular, he is interested in the dynamic behavior exhibited by cognitive radar networks, and how the parts of a cognitive radar network can be coordinated. 

He is an active member of IEEE and several societies. 
He has previously served as the IEEE Region 2 Regional Student Representative from 2018 to 2020. 
\end{IEEEbiography}

\begin{IEEEbiography}[{\includegraphics[width=1in,height=1.25in,clip,keepaspectratio]{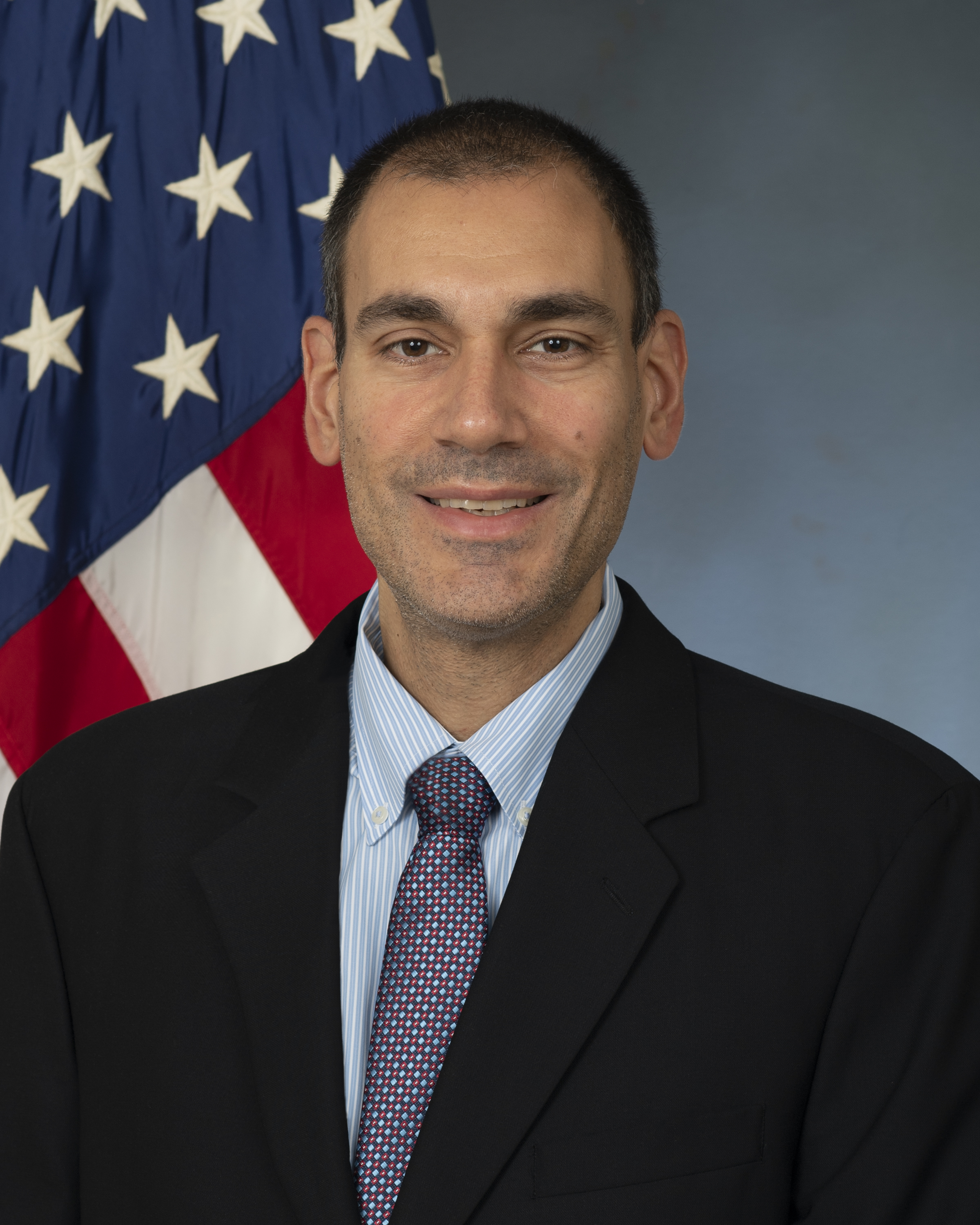}}]%
{Anthony F. Martone}
received the B.S. (summa cum laude) degree in electrical engineering from Rensselaer Polytechnic Institute, Troy, NY, USA, in 2001 and the Ph.D. degree in electrical engineering from Purdue University, West Lafayette, IN, USA, in 2007. He joined the DEVCOM Army Research Laboratory (ARL), Adelphi, MD, USA, in 2007 as a Researcher with the RF Signal Processing and Modeling branch where his research interests include radar, cognitive radar, sensing through the wall technology, spectrum sharing, and radar signal processing. 
He is currently a Subject Matter Expert (SME) for radar spectrum sharing and nonlinear radar technologies with the Department of Defense (DoD) and academic communities. 
He is also leading cognitive radar initiatives with ARL to address spectrum sharing for radar and communication systems, software-defined transceiver control, and adaptive processing techniques. 
He has served as a committee member for eight graduate students with The Pennsylvania State University, the Virginia Polytechnic Institute and State University, and Bowie State University. 
He has authored more than 150 journals and conference publications, two book chapters, ten patents, drafted five new spectrum sharing standards for the IEEE 686 Radar Standards Document, provided the Plenary Presentation at the 2022 IEEE Radar Conference (New York, NY, USA) and served as the General Co-Chair for the 2023 IEEE Radar Conference (San Antonio, TX, USA). 

He was elevated to IEEE Fellow in 2023 for contributions to the development and validation of cognitive radar systems. 
He served as an Associate Editor for \emph{IEEE Transactions on Aerospace and Electronic Systems} from 2017 to 2023, where he received the \emph{IEEE Transactions on Aerospace and Electronic Systems} Associate Editor Award for Excellence in 2023. 
He also served on the Spectral Innovations, Standards, and Publications Committees for the IEEE Aerospace and Electronic Systems Society Radar Systems Panel from 2019 to 2023. 
He received the Commanders Award for Civilian Service in December 2011 for his research and development of sensing through the wall signal processing techniques. 
    
\end{IEEEbiography}

\begin{IEEEbiography}[{\includegraphics[width=1in,height=1.25in,clip,keepaspectratio]{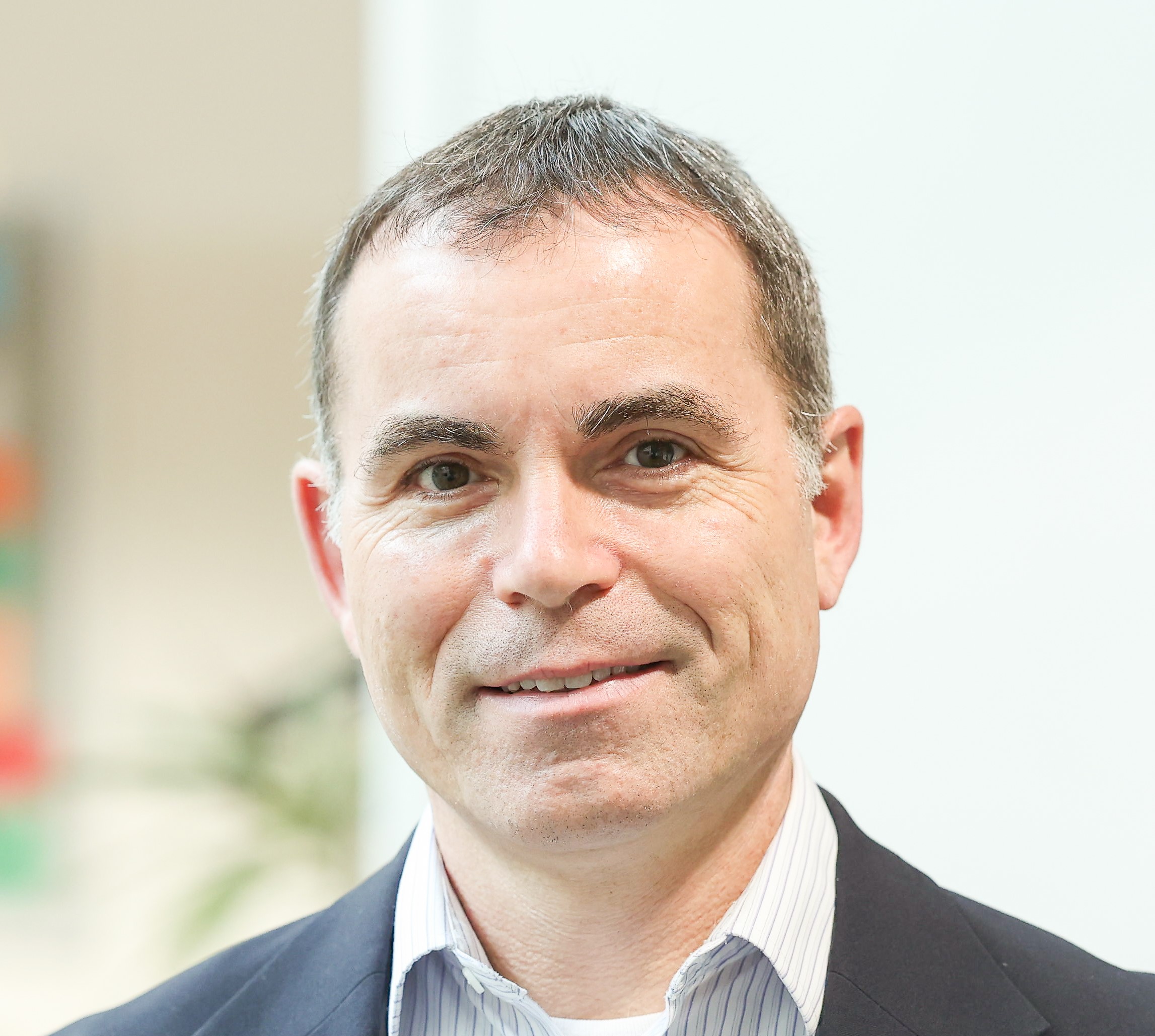}}]%
{R. Michael Buehrer}
joined Virginia Tech from Bell Labs as an Assistant Professor with the Bradley Department of Electrical and Computer Engineering in 2001. 
He is currently a Professor of Electrical Engineering and is the Director of
Wireless @ Virginia Tech, a comprehensive research group focusing on wireless communications, radar and localization. 
During 2009 Dr. Buehrer was a visiting researcher at the Laboratory for Telecommunication Sciences (LTS) a federal research lab which focuses on telecommunication challenges for national defense. 
While at LTS, his research focus was in the area of cognitive radio with a particular emphasis on statistical learning techniques

Dr. Buehrer was named an IEEE Fellow in 2016 “for contributions to wideband signal processing in communications and geolocation.” 
His current research interests include machine learning for wireless communications and radar, geolocation, position location networks, cognitive radio, cognitive radar, electronic warfare, dynamic spectrum sharing, communication theory, Multiple Input Multiple Output (MIMO) communications, spread spectrum, interference avoidance, and propagation modeling. 
His work has been funded by the National Science Foundation, the Defense Advanced Research Projects Agency, the Office of Naval Research, the Army Research Office, the Air Force Research Lab and several industrial sponsors.

Dr. Buehrer has authored or co-authored over 90 journal and approximately 260 conference papers and holds 18 patents in the area of wireless communications. 
In 2021 he was the co-recipient of the \emph{Vanu Bose Award} for the best paper at \emph{MILCOM’21}. 
In 2010 he was co-recipient of the \emph{Fred W. Ellersick MILCOM Award} for the best paper in the unclassified technical program. 
He was formerly an Area Editor for \emph{IEEE Wireless Communications}. 
He was also formerly an associate editor for \emph{IEEE Transactions on
Communications}, \emph{IEEE Transactions on Vehicular Technologies}, \emph{IEEE Transactions on Wireless Communications}, \emph{IEEE Transactions on Signal Processing}, \emph{IEEE Wireless Communications Letters}, \emph{and IEEE Transactions on Education}. 
He has also served as a guest editor for special issues of \emph{The Proceedings of the IEEE}, and \emph{IEEE Transactions on Special Topics in Signal Processing}. 
In 2003 he was named Outstanding New Assistant Professor by the Virginia Tech College of Engineering and in 2014 he received the Dean’s Award for Excellence in Teaching.
\end{IEEEbiography}
\vspace{2.3in}

\end{document}